\newtheorem{definition}{Defination}[section]
\newtheorem{lemma}{Lemma}[section]
\newtheorem{theorem}{Theorem}[section]
\newenvironment{proof}{Proof.}{\hfill$\square$\par}
\newtheorem{remark}{Remark}[section]
\newtheorem{asp}{Assumption}[section]
\newtheorem{corollary}{Corollary}[section]
\newif\if@restonecol  
\begin{document}
%
\title{Decentralized Wireless Federated Learning with Differential Privacy}
\author{Shuzhen Chen,~\IEEEmembership{Member,~IEEE,}
		Dongxiao Yu,~\IEEEmembership{Senior Member,~IEEE,}
		Yifei Zou,~\IEEEmembership{Member,~IEEE,}
		Jiguo Yu,~\IEEEmembership{Fellow,~IEEE,}
		and~Xiuzhen Cheng,~\IEEEmembership{Fellow,~IEEE}
		\thanks{S. Chen, D. Yu, Y. Zou and X. Cheng are with the School of Computer Science and Technology, Shandong University, Qingdao, 266237, P.R. China.\protect\\
			(e-mail: \{szchen\}@mail.sdu.edu.cn, \{dxyu,yfzou, xzcheng\}@sdu.edu.cn).}
			\thanks{J. Yu is with the School of Computer Science and Technology, Qilu University of Technology (Shandong Academy of Science), Jinan 250014, P.R. China, and also with the Shandong Computer Science Center (National Supercomputer Center in Jinan), Jinan 250014, P.R. China. (e-mail: jiguoyu@sina.com).}
		}
\markboth{Journal of \LaTeX\ Class Files,~Vol.~14, No.~8, August~2015}%
{Shell \MakeLowercase{\textit{et al.}}: Bare Demo of IEEEtran.cls for IEEE Journals}
%



\maketitle

\begin{abstract}
This paper studies decentralized federated learning algorithms in wireless IoT networks. The traditional parameter server architecture for federated learning faces some problems such as low fault tolerance, large communication overhead and inaccessibility of private data. To solve these problems, we propose a \uline{D}ecentralized-\uline{W}ireless-\uline{F}ederated-\uline{L}earning algorithm called DWFL. The algorithm works in a system where the workers are organized in a peer-to-peer and server-less manner, and the workers exchange their privacy preserving data with the analog transmission scheme over wireless channels in parallel.
With rigorous analysis, we show that DWFL satisfies $(\epsilon,\delta)$-differential privacy and the privacy budget per worker scales as $\mathcal{O}(\frac{1}{\sqrt{N}})$, in contrast with the constant budget in the orthogonal transmission approach. Furthermore, DWFL converges at the same rate of $\mathcal{O}(\sqrt{\frac{1}{TN}})$ as the best known centralized algorithm with a central parameter server. Extensive experiments demonstrate that our algorithm DWFL also performs well in real settings.

\end{abstract}

\begin{IEEEkeywords}
			Federated learning, Decentralized learning, IoT System, Differential privacy.
\end{IEEEkeywords}

%
\IEEEpeerreviewmaketitle

\section{Introduction}\label{sec:introduction}
More and more intelligent devices have emerged in Internet of Things (IoT) systems in recent years, including all kinds of wearable smart devices and intelligent sensory elements. Smart devices play an important role in fields like smart healthcare, smart cities, transportation and automated systems~\cite{22DBLP:journals/access/UllahAAAHJ21}. Meanwhile, a large amount of data has been generated. Faced with the pressure of large amount of data transmission, processing and analysis, federated learning~(FL) becomes popular in distributed machine learning. The workers in FL do not need to send the local raw data set to the central parameter server~(PS)~\cite{1DBLP:conf/aistats/McMahanMRHA17}. Instead, each worker only needs to transmit the locally trained model to PS. In contrast with centralized learning, FL exhibits several special advantages. For example, the centralized approach is inefficient in terms of a large amount of computation, whereas FL can fully make use of the growing computation power of smart devices; The local raw data for each worker is never shared such that each worker' privacy is more possible to be protected. It is thus highly conducive to realizing privacy-enhanced IoT networks.

On the other hand, the development of FL still faces some great challenges. First, the traditional parameter server architecture is fragile, since the entire system is centered at the PS and any failure on the PS will make the whole system broken down. Second, the frequent communication of many wireless applications in IoT systems consumes a lot of bandwidth resource. When a large number of devices participate in the learning process, the bandwidth will become the bottleneck of the learning system especially in large networks~\cite{25DBLP:journals/corr/abs-2103-16055}. Third, the privacy leakage has not been totally addressed and recent studies, e.g.~\cite{11DBLP:conf/sp/ShokriSSS17,12DBLP:journals/popets/HayesMDC19,13DBLP:conf/sp/MelisSCS19}, have shown that there exist attacks that can violate privacy even if only the model parameters or gradients are exchanged between machines. To address these challenges, it is crucial to consider efficient wireless FL with privacy protection.

Most existing works in wireless FL consider improving system scalability~\cite{5DBLP:journals/twc/ZhuWH20,8DBLP:journals/jsac/WangTSLMHC19,9DBLP:journals/twc/AmiriG20}, privacy protection~\cite{10DBLP:journals/jsac/MohamedCT21,20DBLP:conf/isit/SeifTL20} and communication efficiency~\cite{4DBLP:journals/tsp/AmiriG20,6DBLP:journals/twc/YangJSD20,7DBLP:conf/spawc/AmiriG19}. Then natural questions arise. Can wireless FL perform well with privacy protection under Differential Privacy (DP) in decentralized topology? Is there a better tradeoff between the convergence of FL and the privacy protection of workers by utilizing the superposition property of wireless channels?

In this paper, we devise a Decentralized Wireless privacy-preserving Federated Learning (DWFL) algorithm that can be generally and efficiently implemented in wireless IoT networks, to answer the above questions affirmatively. The DWFL algorithm can significantly speed up the training process and avoid the fatal single failure compared with transferring all the data to the PS in the centralized topology, such that any single failure will not strike a fatal blow and break the learning process.

Faced with the frequent communication between workers in the decentralized topology, our algorithm DWFL fully takes advantage of the superposition property of the analog scheme~\cite{23DBLP9441051} to improve the parallelization of communication, such that the number of communication rounds is greatly reduced. Based on the analog scheme, workers transmit their message simultaneously and the receiver will obtain the aggregated result through the over-the-air computation channel directly instead of individual messages from each of the workers. In contrast, the digital scheme requires quantization and channel encoding/decoding, which is always relying on a trusted PS. In addition, the digital scheme of transmitting and reconstructing all information entries one-by-one is an overkill because of the bandwidth limitations posed by practical wireless communications~\cite{25DBLP:journals/corr/abs-2103-16055,10DBLP:journals/jsac/MohamedCT21,4DBLP:journals/tsp/AmiriG20}. Thus, the digital scheme is not applicable to our high-efficiency algorithm. Hence, we here adopt the anolog scheme, where all users can simultaneously use the same time-frequency resources to transmit messages.

We also provide analyzable privacy protection for workers under the requirement of DP~\cite{14DBLP:journals/fttcs/DworkR14}. Our approach allows workers to transmit privacy-preserving data perturbed by random Gaussian noise instead of their raw data. In our analog-based algorithm, it shows that the per-user privacy level behaves at $\mathcal{O}(1/\sqrt{N})$. It means that the privacy budget decays with the number of total workers $N$, which is desirable in the applications of large-scale IoT networks.

The main contributions of this paper are summarized as follows:
\begin{itemize}
	\item We propose DWFL, an innovative, robust and efficient wireless federated learning algorithm, where the problem of single failure, limited bandwidth resource can be solved.
	\item We give the privacy and convergence analysis of DWFL. It is shown that DWFL satisfies $(\epsilon,\delta)$-differential privacy and has great advantages in noise resistance comparing to the orthogonal transmission scheme. Furthermore, DWFL converges at the same rate $\mathcal{O}(\sqrt{\frac{1}{TN}})$ as the centralized algorithm where all workers are connected to a PS. 
	\item We conduct extensive experiments to illustrate the performance of decentralized privacy-preserving wireless FL algorithm on public datasets. It demonstrates that our algorithm DWFL performs well in real settings.
\end{itemize} 

\section{Related Work}\label{sec:rw}

As a conventional approach of distributed learning, federated learning is receiving more and more attention. In the traditional federated learning, unavailability of any central parameter server results in an immediate and complete interruption of the training process. In order to reduce the risk of single point failure, the decentralized federated learning architecture has been proposed~\cite{3DBLP:journals/spm/LiSTS20}. In decentralized federated learning, clients exchange their local model updates directly in a point-to-point manner. However, considering clients' frequent communication with each other in the learning process, exchanging information can incur a lot of communication overhead. 

The superposition nature of analog schemes naturally enables federated training to involve data aggregation from multiple clients over the air. Wireless federated learning is broadly divided into digital and analog solutions depending on the transmission strategy. Some recent studies have focused on the use of wireless channels to reduce communication overhead of federated learning~\cite{5DBLP:journals/twc/ZhuWH20,8DBLP:journals/jsac/WangTSLMHC19,9DBLP:journals/twc/AmiriG20,10DBLP:journals/jsac/MohamedCT21,4DBLP:journals/tsp/AmiriG20,6DBLP:journals/twc/YangJSD20,7DBLP:conf/spawc/AmiriG19}. \cite{9DBLP:journals/twc/AmiriG20,4DBLP:journals/tsp/AmiriG20} have studied the digital scheme for wireless federated learning. In~\cite{4DBLP:journals/tsp/AmiriG20}, by setting the number of top elements to one value before transmission, the gradient vectors are first locally sparse and quantified at the users. In~\cite{9DBLP:journals/twc/AmiriG20}, Amiri et al. modified the digital scheme~\cite{4DBLP:journals/tsp/AmiriG20} so that only users with optimal channel conditions are allowed to transmit. However, digital schemes require recipients to decode the data sent by individuals and then aggregate them, which burdens the communication network seriously. In addition. the privacy cannot be well protected compared to the privacy-enhanced protection by analog schemes.

In analog schemes, all users transmit the scaled data simultaneously over the wireless channel and obtain the aggregated results directly. The non-orthogonal aggregation makes the analog scheme more bandwidth efficient than the digital one. In~\cite{5DBLP:journals/twc/ZhuWH20,6DBLP:journals/twc/YangJSD20}, the wireless aggregation is accomplished by incorporating user scheduling through power control or beamforming to enhance the communication efficiency. In~\cite{8DBLP:journals/jsac/WangTSLMHC19}, Wang et al. optimized the frequency of global aggregation based on the model, data and system dynamics in terms of the convergence rate of wireless federated learning. In~\cite{7DBLP:conf/spawc/AmiriG19}, the gradients are projected to the low dimension, and only users with good channel conditions are allowed to transmit to improve the communication efficiency. In~\cite{10DBLP:journals/jsac/MohamedCT21}, Seif et al. studied the federated learning over a wireless channel based on user sampling. We are interested in decentralized federated learning with the anolog scheme, which is more scalable and more efficient in communication.

There are also recent works focusing on the privacy protection of federated learning. Although clients never share local data in federated learning, exchanging raw gradients can still leak information~\cite{11DBLP:conf/sp/ShokriSSS17,12DBLP:journals/popets/HayesMDC19,13DBLP:conf/sp/MelisSCS19}. Recently there has been a flurry of interest in designing federated learning algorithms with strong privacy safeguards. Differential privacy~\cite{14DBLP:journals/fttcs/DworkR14} is a standard notion for private data aggregation and analysis. Some studies have designed federated learning algorithms satisfying differential privacy~\cite{10DBLP:journals/jsac/MohamedCT21,20DBLP:conf/isit/SeifTL20,152020DP,16geyer2017differentially,17truex2019hybrid,18wei2020federated,19lu2019differentially}. In~\cite{152020DP}, Huang et al, designed the DPAGD-CNN algorithm and the DP-FL framework for unbalanced data. Geyer et al. proposed a novel algorithm for client sided differential privacy to preserve federated optimization in~\cite{16geyer2017differentially}. In~\cite{17truex2019hybrid}, a new algorithm combining DP and SMC was proposed to implement federated learning, which can improve the accuracy of the model, guarantee provable privacy, and prevent collusion threats and extraction attacks. In~\cite{18wei2020federated}, Wei et al. proposed a new framework NbAFL based on the notion of differential privacy. Lu et al. proposed a secure and robust asynchronous federated learning scheme based on differential privacy in vehicular networks in~\cite{19lu2019differentially}. In~\cite{10DBLP:journals/jsac/MohamedCT21,20DBLP:conf/isit/SeifTL20}, Seif et al. studied the anolog scheme for wireless federated learning with local differential privacy. However, all existing works do not consider the decentralized wireless setting of FL.

\section{Preliminaries and Model}\label{sec:model}

In this section we will introduce our system model and problem statement in detail. 

We consider the decentralized federated learning. In this model, the workers are organized in a  peer-to-peer and server-less manner. Specifically, each worker can directly communicate with other workers. In our decentralized federated learning model, each worker serves as a center and communicates with all other centers. In this peer-to-peer setting, it is no longer considered that any central point failure causes the learning process to fail.

We also consider the privacy concern in the learning process. In our model, the workers are not allowed to transmit local raw data directly. Instead, the information exchanged by the workers only contains perturbed parameters of the learning model. We divide the learning process into several synchronous rounds. The total number of iterations is $T$. In each round $t$, the worker accomplishes local computation and transmits its information over a wireless channel, modeled by a Gaussian multiple access channel~(MAC). 
In our peer-to-peer setting, each worker receives signal from all other $(N-1)$ workers. 
The input-output relationship of Gaussian MAC at round $t$ is
\begin{equation}\label{equ:mac}
	v_i^{(t)} = \sum_{k\neq i} h_{k}\tilde{x}_k^{(t)} + m_{i}^{(t)},
\end{equation}
where $\tilde{x}_k^{(t)} \in \mathbb{R}^{d}$ is the signal~(\textit{i.e.} perturbed local parameter) transmitted by worker $k$ at round $t$. And $v_i^{(t)}$ is the output of Gaussian MAC at the side of worker $i$. Here $h_{k} = e^{\varrho\theta_{k}}|h_k|$ is a complex valued time-invariant channel coefficient between $k$-th worker and $i$-th worker. $\varrho$ is a constant. For simplicity, we assume this coefficient $h_{k}$ is decided only by sender $k$. Besides the aggregation part $\sum_{k\neq i} h_{k}\tilde{x}_k^{(t)}$, $m_{i}^{(t)} \in \mathbb{R}^d$ is a independent additive zero-mean unit-variance Gaussian noise, caused by inherent 
channel noise. The channel noise is related to the receiving worker $i$. The transmission for each worker $k$ in MAC is constrained by the maximum power of $P_k$.

During the learning process, we consider the following decentralized optimization:
\begin{equation*}
	\min_{x\in \mathbb{R}^{d}} f(x) = \frac{1}{n}\sum_{i=1}^{n}\mathbb{E}_{\xi \sim D_{i}} F_{i}(x;\xi),
\end{equation*}
where $D_{i}$ is the local data set for worker $i$. 
$F_{i}(x,\xi)$ denotes the value of loss function for worker $i$ 
given parameter $x$ and data sample $\xi$. 

Let $f_{i}(x) = \mathbb{E}_{\xi \sim D_{i}}F_{i}(x;\xi)$. Particularly, we try to find an upper bound of $\frac{1}{T}\sum_{t=0}^{T-1}\mathbb{E} \|\nabla f( \overline{x}_{t-\frac{1}{2}})\|$ to analyze the convergence rate of our algorithm.

The private guarantee is of vital importance if the participants of decentralized learning do not admit their 
local training data to be shared. Although each worker communicates with its neighbors by transmitting parameters 
instead of sending raw data, the risk of leaking information still exists~\cite{wang2019beyond}.

Differential privacy~\cite{dwork2014algorithmic} is a mechanism to avoid leaking any private information of single worker by adjusting the feedback of query functions, no matter what auxiliary information the malicious adversary worker has. In the context of our decentralized learning, the exchanging process of parameters is a sequence of queries. The differential privacy in our setup is formally defined as follows:

\begin{definition}\label{definition:dp}
	A randomized query $\mathcal{M}$ on training set with domain $\mathcal{D}$ and range $\mathcal{R}$ satisfies 
	$(\epsilon,\delta)$-differential privacy if for any two adjacent inputs $d,d' \in \mathcal{D}$ and for any subset of 
	outputs $\mathcal{S} \subset \mathcal{R}$ it hols that
	\begin{equation*}
		Pr(\mathcal{M}(d)\in\mathcal{S}) \leq e^{\epsilon}Pr(\mathcal{M}(d')\in \mathcal{S})+\delta,
	\end{equation*}
	where the privacy budget $\epsilon$ denotes the privacy lower bound to measure a randomized query $\mathcal{M}$. And $\delta$ denotes the probability of breaking this lower bound.
\end{definition}

If $\delta = 0$, the above  $(\epsilon,\delta)$-differential privacy is called $\epsilon$-differential privacy. The main purpose is to guarantee $\epsilon$-differential privacy in the decentralized federated learning for each worker while reducing the influence of differential privacy mechanism on convergence rate as small as possible.

A notable fact is that if a query function $\mathcal{M}$ is more sensitive to local data, it is harder for $\mathcal{M}$ to keep $\epsilon$-differential privacy when changing $d\in \mathcal{D}$. Hence, the sensitivity of the query function is commonly used in analyzing the differential privacy of a mechanism, which is defined as below.

\begin{definition}\label{definition:sensitivity}
	For $f:\mathcal{D} \rightarrow \mathbb{R}^N$, the $L2$-sensitivity of $f$ is defined as
	\begin{equation*}
		\Delta_2 f = \max_{d_1,d_2} \|f(d_1) - f(d_2) \|,
	\end{equation*}
	where all $d_1, d_2$ differs in at most one element.
\end{definition}

	The notations used in this paper are summarized in Table \ref{ta:1}.
	\begin{table}[]
		\caption{Frequent Notations and Descriptions}
		\begin{tabular}[t]{p{2cm} p{6cm}}
			\hline
			Notations & Descriptions \\
			\hline
			$N$ & The number of total workers\\
			$d$ & The dimension of local parameter\\
			$\gamma$ & The step size\\
			$\eta$ & The averaging rate\\
			$\epsilon$ & The budget of differential privacy\\
			$\delta$ & The slack variable of differential privacy\\
			$D_{i}$ & The local data set for worker $i$\\
			$T$ & The total number of iterations\\
			$t$ & The current iteration round\\
			$F_{i}(x,\xi)$ & The loss function for worker $i$ \\
			$f_{i}(x)$ & $\mathbb{E}_{\xi \sim D_{i}}F_{i}(x;\xi)$\\
			$v_i^{(t)}$ & The input of Gaussian MAC at worker $i$\\
			$h_{k}$ &  $e^{\varrho\theta_{k}}|h_k|$\\
			$\tilde{x}_k^{(t)}$ & The perturbed local parameter\\
			$m_{i}^{(t)}$ & The additive channel Gaussian noise\\
			$P_k$ & The maximum constrained power for worker $k$\\
			$\mathcal{G}_i^{(t)}$ & The random Gaussian noise at worker $i$\\
			$g_{i}^{(t)}$ & The gradient of worker $i$ in round $t$\\
			$g_{\max}^{(t)}$ &  $\max_{k} \|g_k^{(t)}\|_2$\\
			$\nabla$$F(\cdot)$ & The gradient of a function $F$\\
			$\left(\frac{1}{N}\right)_N$ & All $\frac{1}{N}$ square matrix with size of $N \times N$\\
			$(1)_{N}$ & All $1$ square matrix with size of $N \times N$\\
			$I$ & An identity matrix\\
			$X_{t}$ & $[x_{1}^{(t)}, x_{2}^{(t)}, \cdots, x_{N}^{(t)}] \in \mathbb{R}^{d\times N}$\\
			$G_{t}$ & $[g_{1}^{(t)}, g_{2}^{(t)}, \cdots, g_{N}^{(t)}]\in \mathbb{R}^{d \times N}$\\
			$\overline{x}_{t}$ & $X_{t}\textbf{1}$\\
			$\overline{G}_{t}$ &$G_{t}\textbf{1}$\\
			$\nabla \overline{f}(X_{t})$ & $\frac{1}{N}\sum_{i=1}^{N}\nabla f_{i}(x_i^{(t)})$\\
			\hline
		\end{tabular}
		\label{ta:1}
	\end{table}

\section{Algorithm and Analysis}\label{sec:algorithm}

In this section we present our decentralized learning strategy over Gaussian MAC. 

We propose a \textbf{D}ecentralized \textbf{W}ireless \textbf{F}ederated \textbf{Learning}~(DWFL) algorithm. The pseudo-code of the algorithm is given in Algorithm~\ref{Algo1}. It mainly contains four steps: \textit{Computing gradient}, \textit{Generating signal}, \textit{Parameter exchange} and \textit{Parameter update}.

We then make a detail analysis on 
our strategy to show its benefit in privacy protection. Furthermore, we also analyze the trade-off between the convergence rate of learning algorithm and privacy leakage.

The whole algorithm is divided into $T$ rounds for a given $T$. At the beginning of each round $t$, we use $x_{i}^{(t-\frac{1}{2})} \in \mathbb{R}$ to denote the local parameter of each worker $i$. Before any communication, works stochastically compute local gradient $g_{i}^{(t)}$ and update their local parameters to obtain $x_{i}^{(t)}$:
\begin{equation}\nonumber
	x_{i}^{(t)} = x_{i}^{(t-\frac{1}{2})} - \gamma g_i^{(t)},
\end{equation}
where $\gamma$ is the step size.

Then, each worker exchanges its local model parameter $\tilde{x}_k^{(t)} \in \mathbb{R}^d$ perturbed by a random Gaussian noise $\mathcal{G}_i^{(t)}$. Specifically, the signal transmitted by worker $i$ in round $t$ is 
\begin{equation}\label{equ:input}
	\tilde{x}_{i}^{(t)} = e^{-\varrho\theta_{i}}\left({\sqrt{\alpha_iP_i}}x_{i}^{(t)} + \sqrt{\beta_{i}P_i}\mathcal{G}_{i}^{(t)}\right).
\end{equation}
Here $\alpha_{i}$ is the fraction of power dedicated to the local parameter $X_{i}^{(t)}$  of worker $i$, while $\beta_{i}$ is the fraction of power dedicated to the random Gaussian noise vector $\mathcal{G}_i^{(t)}$. Each value of $\mathcal{G}_t^{(i)}$ is independently drawn from Gaussian distribution $\mathcal{N}(0,\sigma^2)$. The transmission for each worker $i$ in MAC is constrained by the maximum power of $P_i$. The coefficients of signal $\tilde{x}_{i}^{(t)}$ have to satisfy $(\sqrt{\alpha_i P_i})^2 + (\sqrt{\beta_{i}P_i})^2 \leq P_i$. Therefore $\alpha_i + \beta_i \leq 1$, where $\alpha_i \geq 0$ and $\beta_i \geq 0$. Recall that in Eqt. (\ref{equ:mac}) every input from worker $i$ is multiplied by a complex number $h_i = e^{\varrho\theta_i}|h_i|$. And notice that the input is multiplied by $e^{-j\theta_{i}}$ to obtain $|h_i|$.
It can guarantee that the received channel coefficient is non-negative.

From Eqt. (\ref{equ:mac}) and (\ref{equ:input}), the received signal at worker $i$ in round $t$ can be written as:
\begin{equation*}
	\begin{aligned}
		v_i^{(t)} &= \sum_{k\neq i} h_{k}\tilde{x}_k^{(t)} + m_{i}^{(t)}\\
		&=\sum_{k\neq i}|h_k|\left({\sqrt{\alpha_kP_k}}x_{k}^{(t)} + \sqrt{\beta_{k}P_k}\mathcal{G}_{k}^{(t)}\right) + m_{i}^{(t)}\\
		&=\sum_{k\neq i}|h_k|{\sqrt{\alpha_kP_k}}x_{k}^{(t)} + \sum_{k \neq i}|h_k|\sqrt{\beta_{k}P_k}\mathcal{G}_{k}^{(t)} + m_{i}^{(t)}.
	\end{aligned}
\end{equation*}
Here, $m_{i}^{(t)}$ is the channel noise, whose value is independently drawn from $\mathcal{N}(0,\sigma_m^2)$. In order to aggregate the local parameters over-the-air and make sure that 
worker $i$ receives an unbiased estimate of the average local parameter~(with a constant coefficient) of all other workers from the network, every worker $i$ should carefully choose their own input parameter $\alpha_i$. In particular, worker $i$ should choose $\alpha_i$ to guarantee that
\begin{equation}\label{equ:alpha}
	|h_i|\sqrt{\alpha_iP_i}=c , \forall i \in [N],
\end{equation}
where $c$ is a constant. So that all received local parameters can be aligned.

Thus, we have $\alpha_i = \displaystyle\frac{c^2}{|h_i|^2P_i}$. Since $\forall i, \alpha_i \leq 1$, the constant $c$ 
can be upper bounded by
\begin{equation}\label{equ:c}
	c = \min_j\sqrt{|h_j|^2P_j}.
\end{equation}
By Eqt. (\ref{equ:alpha}) and (\ref{equ:c}), we obtain
\begin{equation*}
	\alpha_i = \frac{\min_{j}|h_j|^2P_j}{|h_i|^2P_i}.
\end{equation*}
It is obvious that the alignment of local parameters is significantly related to the worst effective SNR, \textit{i.e.} $\min_j|h_j|^2P_j$. The constant $c$ can be determined by communicating with each other once at the beginning. By the above aligning scheme, the receiving signal of worker $i$ can be rewritten as:
\begin{equation}\label{equ:for v by c}
	v_i^{(t)} = c\sum_{k\neq i} x_{k}^{(t)} + \sum_{k \neq i}|h_k|\sqrt{\beta_{k}P_k}\mathcal{G}_{k}^{(t)} + m_{i}^{(t)}.
\end{equation}
Then Eqt.~(\ref{equ:for v by c}) can be written as:

\begin{align}\label{euq:v c}
	\frac{v_i^{(t)}}{c} &= \sum_{k\neq i} x_{k}^{(t)} + \frac{\sum_{k \neq i}|h_k|\sqrt{\beta_{k}P_k}\mathcal{G}_{k}^{(t)}}{c} + \frac{m_{i}^{(t)}}{c}  \nonumber \\
	&= \sum_{k\neq i}\big[x_{k}^{(t)}+\frac{|h_k|\sqrt{\beta_{k}P_k}\mathcal{G}_{k}^{(t)}}{c}+\frac{m_{i}^{(t)}}{(N-1)c}\big] \nonumber\\
	&= \sum_{k \neq i}\big(x_{k}^{(t)}+\Phi_{k}^{(t,i)}\big),
\end{align}
where $\Phi_{k}^{(t,i)}$ is the $k$-st column of the matrix
	\begin{equation*}
	\begin{aligned}
\Phi^{(t,i)} =& [\frac{|h_1|\sqrt{\beta_{1}P_1}\mathcal{G}_{t}^{(1)}}{c}+\frac{m_i^{(t)}}{c(N-1)}, \cdots,\\ &\frac{\sqrt{|h_N|\beta_{N}P_N}\mathcal{G}_{t}^{(N)}}{c}+\frac{m_i^{(t)}}{c(N-1)}]. 
	\end{aligned}
	\end{equation*}
	After the communication step, each worker $i$ will receive a collection of $v_i^{(t)}$s ($i\neq j$). Then the workers will perform another local update using the following rule:

	\begin{equation}\label{euq:iteration}
		\begin{aligned}
			x_{i}^{(t+\frac{1}{2})} = x_{i}^{(t)} + \frac{\eta}{c} \left(\frac{v_i^{(t)}}{N-1}-c\left({x}_{i}^{(t)}+\Phi_i^{(t,i)}\right)\right),
		\end{aligned}
	\end{equation}
where $\eta$ is the averaging rate. 

We define a weight matrix $W = \frac{(1)_{N}- I}{N-1}$, where $(1)_{N}$ is a $N\times N$ matrix whose elements are all $1$'s and $I$ is an identity matrix. By Eqt.~(\ref{euq:v c}) and (\ref{euq:iteration}), we have
\begin{equation*}
		\begin{aligned}
			x_{i}^{(t+\frac{1}{2})}
			&=x_{i}^{(t-\frac{1}{2})}-\gamma g_{i}^{(t)}-\eta\left(x_{i}^{(t)}+\Phi_i^{(t,i)}\right)  \\	
			&~~~~+\eta\sum_{k=1}^{N} \left[ W_{ik}\times \left(x_{k}^{(t)}+\Phi_{k}^{(t,i)}\right) \right]\\
			&=\eta\sum_{k=1}^{N}\left( W_{ik}-I_{ik}\right)\left(x_{k}^{(t-\frac{1}{2})}-\gamma g_{i}^{(t)}+\Phi_{k}^{(t,i)}\right)\\
			&~~~~+x_{i}^{(t-\frac{1}{2})}-\gamma g_{i}^{(t)}\\
			&~~~~  \\
		\end{aligned}
	\end{equation*} 

We denote $\Psi =  (1-\eta)I + \eta W$, $X_{t} = [x_{1}^{(t)}, x_{2}^{(t)}, \cdots, x_{N}^{(t)}] \in \mathbb{R}^{d\times N}$, $G_{t} = [g_{1}^{(t)}, g_{2}^{(t)}, \cdots, g_{N}^{(t)}]\in \mathbb{R}^{d \times N}$. Then from a global view, the update rule of all $N$ workers can be represented as a matrix $X_{(t+\frac{1}{2})}$ as follow:

\begin{align}\label{equ:X}
	&~~~~X_{(t+\frac{1}{2})}\nonumber \\
	&= X_{t-\frac{1}{2}}-\gamma G_{t}+\eta \left(X_{t-\frac{1}{2}}-\gamma G_{t} + \Phi_{t}\right)\left(W-I\right)\nonumber \\
	&=\left(X_{t-\frac{1}{2}}-\gamma G_{t}\right)\left(I+\left(W-I\right)\eta\right)+\eta\Phi_{t}\left(W-I\right)\nonumber \\
	&=\left(X_{t-\frac{1}{2}}-\gamma G_{t}\right)\Psi + \Phi_{t}\left(\Psi-I\right).
\end{align}

Define a $N$ dimensional vector $A=\left[\frac{1}{N},\frac{1}{N},\cdots,\frac{1}{N}\right]^\top$. Multiplying both sides of Eqt.~(\ref{equ:X}) by $A$, it can be obtained that
\begin{equation}
	\begin{aligned}
		\overline{x}_{(t+\frac{1}{2})}=\overline{x}_{(t-\frac{1}{2})}-\gamma\overline{G}_{(t)}.
	\end{aligned}
\end{equation}

Here, $\overline{G}_{(t)}$ is exactly the average value of the true gradients of each worker's local parameter (i.e. $x_i^{(t-\frac{1}{2})}$). 

Hence, the parameter update approach in our algorithm is in fact a gradient descent manner performed in a decentralized system from a global view. And we will show that our algorithm can converge quickly while preserving the privacy.
\begin{algorithm}
	\normalem
	\caption{DWFL: Decentralized Wireless Federated Learning}
	\label{Algo1}  
	\LinesNumbered   
	\textbf{Initialize:} $x_{1:N}^{(-\frac{1}{2})}=0_{1:N}$, learning rate $\gamma$, averaging rate $\eta$, number of total iterations $T$ and variance of noise $\sigma^2$.\newline
	\For{$t=0$ to $T-1$}{
		\ForEach{worker $i$ = $1$ to $N$}{
			\tcc{Computing gradient}
			Randomly sample $\xi_{i}^{(t)}$ and compute local stochastic gradient $g_{i}^{(t)}:=\nabla$$F_{i}(x_{i}^{(t-\frac{1}{2})},\xi_{i}^{(t)})$\\
			\tcc{Generating signal}
			Generate local signal with stochastic gradient:\\
			$x_{i}^{(t)} = x_{i}^{(t-\frac{1}{2})} - \gamma g_i^{(t)}$\\
			Randomly generate Gaussian noise $\mathcal{G}_i^{(t)} \in \mathbb{R}^d$ from $\mathcal{N}(0,\sigma^2)$ and add the noise to signal:\\
			$\tilde{x}_{i}^{(t)} = e^{-j\theta_{i}}\left({\sqrt{\alpha_iP_i}}x_{i}^{(t)} + \sqrt{\beta_{i}P_i}\mathcal{G}_{i}^{(t)}\right)$\\
			\tcc{Parameter exchange}
			Broadcast $\tilde{x}_i^{(t)}$ to the network\\
			\tcc{Parameter update}
			Update local parameters $x_{i}^{(t+\frac{1}{2})} = x_{i}^{(t)} + \frac{\eta}{c} \left(\frac{v_i^{(t)}}{N-1}-c\left({x}_{i}^{(t)}+\Phi_i^{(t)}\right)\right)$.
		}
	}
	\KwOut{\textbf{x}}
\end{algorithm}

The overall scheme is summarized in Algorithm \ref{Algo1}. More specifically, each worker $i$ maintains two major local variables: the model parameter $x_{i}^{(t)}$ and the privacy preserving variable $\tilde{x}_i^{(t)}$. The parameter $x_i^{(-\frac{1}{2})}$ for each worker is initialized to $0$. In each round $t$, for each worker $i$, Algorithm \ref{Algo1} mainly contains the following four steps:
\begin{itemize}
	\item \textbf{Computing gradient:} worker $i$ computes the local stochastic gradient $g_{i}^{(t)} = \nabla$$F_{i}(x_{i}^{(t)},\xi_{i}^{(t)})$, where $\xi_{i}^{(t)}$ is randomly sampled from $D_i$.
	\item \textbf{Generating signal:} worker $i$ uses $g_{i}^{(t)}$ to update the local parameter $x_{i}^{(t-\frac{1}{2})}$. Then worker $i$ adds random noise to the parameter $x_{i}^{(t)}$ updated last step to obtain the perturbed parameter $\tilde{x}_i^{(t)}$.
	\item \textbf{Parameter exchange:} worker $i$ broadcasts $\tilde{x}_i^{(t)}$ to the wireless network.
	\item \textbf{Parameter update:} worker $i$ updates local parameter with the rule:\\ $x_{i}^{(t+\frac{1}{2})} = x_{i}^{(t)} + \frac{\eta}{c} \left(\frac{v_i^{(t)}}{N-1}-c\left({x}_{i}^{(t)}+\Phi_i^{(t)}\right)\right)$.
\end{itemize}

The learning process of each worker is demonstrated in Fig. \ref{fig: each worker}. 

\begin{figure*}
	\centering
	\includegraphics[width=\linewidth]{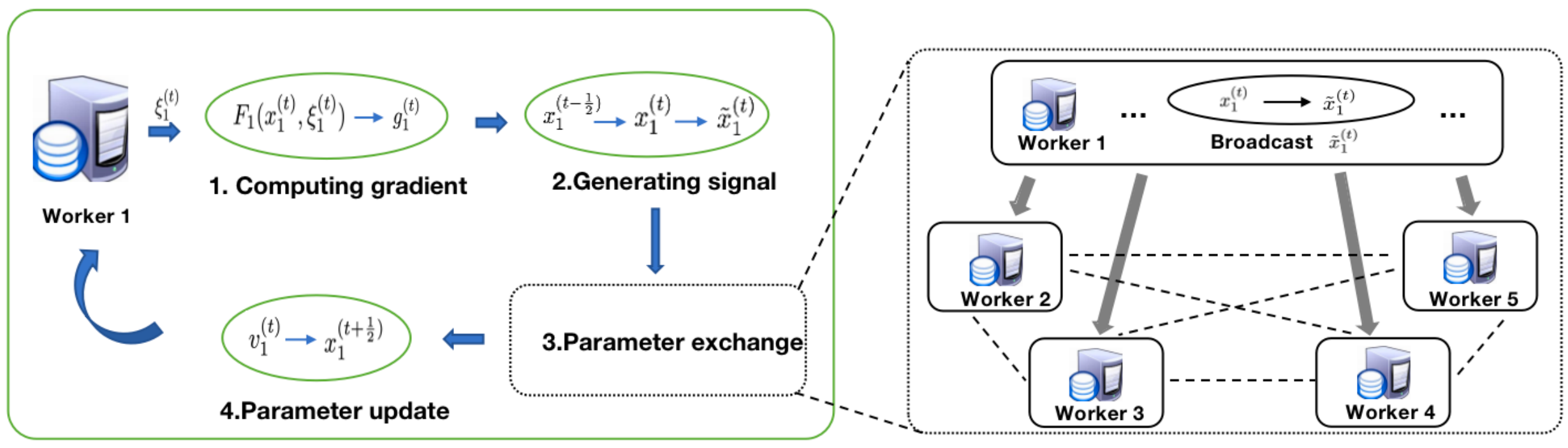}
	\caption{The learning process for each worker.}
	\label{fig: each worker}
\end{figure*}

\subsection{Analysis on Privacy Leakage}

In this part we provide detailed analysis on privacy budget of our wireless decentralized federated learning algorithm. Recall that each parameter are perturbed by random Gaussian noise, and the channel noise is also drawn from Gaussian distribution. Therefore, we introduce the following lemma for analyzing differential privacy with Gaussian mechanism, which can be found in \cite{dwork2014algorithmic}.

\begin{lemma} \label{lemma:Gaussian}
	Let $\epsilon \in (0,1)$ be arbitrary. For constant $a^{2}>2\ln(1.25/\delta)$, the Gaussian Mechanism with parameter $\sigma \geq a\Delta_{2}f/\epsilon$ is $(\epsilon,\delta)$-differentially private, where $f$ is a query function that performs on the dataset.
\end{lemma}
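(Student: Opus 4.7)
The plan is to verify the lemma by a direct calculation on the privacy loss random variable, following the classical Dwork--Roth argument. Fix two adjacent inputs $d,d'$ and set $\Delta = f(d)-f(d')$, so $\|\Delta\|_2 \leq \Delta_2 f$. The output of the Gaussian mechanism is $\mathcal{M}(d)=f(d)+Z$ with $Z\sim \mathcal{N}(0,\sigma^2 I)$. By rotational invariance of the isotropic Gaussian, I can assume without loss of generality that $\Delta$ is aligned with a single coordinate axis, which reduces the analysis to a one-dimensional problem where only the component of $Z$ along $\Delta$ matters.

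In that reduced setting I would write out the log-ratio of the two Gaussian densities,
\begin{equation*}
\ln\frac{p_{\mathcal{M}(d)}(y)}{p_{\mathcal{M}(d')}(y)} = \frac{1}{2\sigma^2}\bigl(\|y-f(d')\|^2 - \|y-f(d)\|^2\bigr),
\end{equation*}
and, after substituting $y = f(d)+z$, obtain a quantity that is affine in the noise component along $\Delta$, namely $\frac{1}{2\sigma^2}(2\langle z,\Delta\rangle + \|\Delta\|^2)$. Defining the set $B=\{z : \text{this log-ratio exceeds }\epsilon\}$, the standard template then reduces the claim to two sub-goals: (i) show $\Pr[Z\in B]\leq \delta$, and (ii) show that outside $B$ the privacy-loss bound $e^\epsilon$ holds pointwise. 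The second part is immediate from the definition of $B$; the first is where the quantitative constants come in.

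The main obstacle is the Gaussian tail bound needed for (i). After rearranging, the event $Z\in B$ becomes $\langle Z,\Delta\rangle > \epsilon\sigma^2 - \|\Delta\|^2/2$, i.e.\ a one-dimensional event $\{g > t\}$ with $g\sim\mathcal{N}(0,\sigma^2\|\Delta\|^2)$. Using $\|\Delta\|\leq \Delta_2 f$ and the hypothesis $\sigma \geq a\Delta_2 f/\epsilon$, I would bound $t/(\sigma\|\Delta\|)$ from below by (roughly) $\epsilon\sigma/\|\Delta\| - \|\Delta\|/(2\sigma) \geq a - \epsilon/(2a)$, and then invoke the Mills-ratio style tail estimate $\Pr[\mathcal{N}(0,1) > s] \leq \frac{1}{s\sqrt{2\pi}}e^{-s^2/2}$. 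The algebraic work of checking that the chosen constant $1.25$ inside the logarithm is indeed what makes $a^2 > 2\ln(1.25/\delta)$ force this tail probability below $\delta$ (using also $\epsilon<1$) is the routine but delicate part; it is precisely this step that pins down the numerical constant appearing in the statement.

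Finally, I would combine the two sub-goals via the standard reduction: for any measurable $\mathcal{S}\subset\mathcal{R}$,
\begin{equation*}
\Pr[\mathcal{M}(d)\in\mathcal{S}] \leq \Pr[Z\in B] + \int_{\mathcal{S}\setminus \text{image of }B} e^\epsilon p_{\mathcal{M}(d')}(y)\,dy \leq \delta + e^\epsilon \Pr[\mathcal{M}(d')\in\mathcal{S}],
\end{equation*}
which is exactly the $(\epsilon,\delta)$-DP guarantee required by Definition~\ref{definition:dp}. Since $d,d'$ were arbitrary adjacent inputs, the conclusion follows, completing the proof.
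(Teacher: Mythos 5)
The paper does not prove this lemma at all --- it imports it verbatim from Dwork and Roth's monograph --- and your outline is exactly the standard argument given there: reduce to one dimension via rotational invariance of the isotropic Gaussian, split the output space according to whether the privacy loss exceeds $\epsilon$, bound the bad event by a Gaussian tail (Mills-ratio) estimate, and assemble the $(\epsilon,\delta)$ guarantee via the usual $\Pr[\mathcal{M}(d)\in\mathcal{S}] \leq \delta + e^{\epsilon}\Pr[\mathcal{M}(d')\in\mathcal{S}]$ decomposition. Your plan is correct and matches the cited proof; the only piece you have deferred is the arithmetic confirming that $a^{2} > 2\ln(1.25/\delta)$ together with $\epsilon < 1$ forces the tail probability below $\delta$, which is exactly the computation carried out in the referenced appendix.
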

Here, $\Delta_2f$ is the $L2$-sensitivity defined in Definition \ref{definition:sensitivity}. In our setting, each worker receives information of other workers via a wireless channel that automatically aggregates perturbed parameters. In round $t$, The final perturbed aggregating parameter received by $i$ is
\begin{equation}\label{equ:v}
	\begin{aligned}
		v_i^{(t)} &=\sum_{k\neq i}|h_k|{\sqrt{\alpha_kP_k}}x_{k}^{(t)} + \sum_{k \neq i}|h_k|\sqrt{\beta_{k}P_k}\mathcal{G}_{k}^{(t)} + m_{i}^{(t)}\\
		&=c\sum_{k\neq i}x_{k}^{(t)} + \sum_{k \neq i}|h_k|\sqrt{\beta_{k}P_k}\mathcal{G}_{k}^{(t)} + m_{i}^{(t)}.
	\end{aligned}
\end{equation}
In Eqt. (\ref{equ:v}), the first term $\sum_{k\neq i}|h_k|{\sqrt{\alpha_kP_k}}x_{k}^{(t)}$ is the correct aggregation of parameters, which can be considered as the output of query function $f$ in Lemma \ref{lemma:Gaussian}. The second term $\sum_{k \neq i}\sqrt{\beta_{k}P_k}\mathcal{G}_{k}^{(t)} + m_{i}^{(t)}$ is the Gaussian noise with variance $\sigma_s^2 = \sum_{k \neq i}|h_k|^2{\beta_{k}P_k}\sigma^2 +\sigma_m^2$.

We use $g_{\max}^{(t)}$ to bound the gradient of each node in round $t$, that is, $g_{\max}^{(t)} = \max_{k} \|g_k^{(t)}\|_2$. In practice, this constraint can easily be satisfied by clipped gradient. The following theorem describes the privacy budget of each peer to peer channel in Algorithm \ref{Algo1}. 

\begin{theorem}\label{theorem:privacy}
	For each worker $i$, \textbf{Algorithm~\ref{Algo1}} guarantees $(\epsilon_i, \delta)$-differential privacy for any other workers in $i$'s process of collecting information in a round $t$, where
	\begin{equation}\label{equ:epsilon}
		\epsilon_i = \frac{2\gamma g_{\max}^{(t)}\sqrt{\min_j|h_j|^2P_j}}{\sqrt{\sum_{k \neq i}|h_k|^2{\beta_{k}P_k}\sigma^2 + \sigma_m^2}}\sqrt{2\ln \frac{1.25}{\delta}}.
	\end{equation}
\end{theorem}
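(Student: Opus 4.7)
The plan is to recognize that the received signal $v_i^{(t)}$ in Equation~(\ref{equ:v}) is literally an instance of the Gaussian mechanism applied to a linear query, so the proof reduces to (i) identifying the query and the effective noise variance and (ii) bounding the $L_2$-sensitivity of the query. Then Lemma~\ref{lemma:Gaussian} is applied verbatim.

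First, I would split $v_i^{(t)}$ into a deterministic query part and an independent Gaussian noise part. The query, from the viewpoint of worker $i$, is the function $f$ that maps the datasets $\{D_k\}_{k\neq i}$ of the other workers to the aggregate $f = c\sum_{k\neq i} x_k^{(t)}$. The remaining summands $\sum_{k\neq i}|h_k|\sqrt{\beta_k P_k}\,\mathcal{G}_k^{(t)} + m_i^{(t)}$ are a sum of independent centered Gaussians, so by independence of the privacy noises $\mathcal{G}_k^{(t)}\sim\mathcal{N}(0,\sigma^2)$ and the channel noise $m_i^{(t)}\sim\mathcal{N}(0,\sigma_m^2)$, this aggregate noise is Gaussian with variance
\[
\sigma_s^2 \;=\; \sum_{k\neq i}|h_k|^2\beta_k P_k\,\sigma^2 \;+\; \sigma_m^2.
\]

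Second, I would compute $\Delta_2 f$. Fix two adjacent collections of datasets that differ in a single data sample belonging to some worker $k^\star\neq i$. All terms $x_k^{(t)}$ for $k\neq k^\star$ are identical in the two executions, and since $x_{k^\star}^{(t)} = x_{k^\star}^{(t-\tfrac12)} - \gamma g_{k^\star}^{(t)}$ with $x_{k^\star}^{(t-\tfrac12)}$ independent of the single differing record, the only change comes from the stochastic gradient. Using $\|g_{k^\star}^{(t)}\|_2 \le g_{\max}^{(t)}$ and the triangle inequality on both gradients,
\[
\bigl\|x_{k^\star}^{(t)}-x_{k^\star}^{(t),\prime}\bigr\|_2 \;\le\; \gamma\,\bigl\|g_{k^\star}^{(t)}\bigr\|_2 + \gamma\,\bigl\|g_{k^\star}^{(t),\prime}\bigr\|_2 \;\le\; 2\gamma g_{\max}^{(t)}.
\]
Multiplying by the alignment scalar $c$ and using Equation~(\ref{equ:c}) to substitute $c = \sqrt{\min_j |h_j|^2 P_j}$, I obtain
\[
\Delta_2 f \;\le\; 2c\gamma g_{\max}^{(t)} \;=\; 2\gamma g_{\max}^{(t)}\sqrt{\min_j|h_j|^2 P_j}.
\]

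Third, I would invoke Lemma~\ref{lemma:Gaussian} with effective standard deviation $\sigma_s$. The smallest privacy parameter $\epsilon_i$ for which the lemma's hypothesis $\sigma_s \ge a\Delta_2 f/\epsilon_i$ is still compatible with $a^2 = 2\ln(1.25/\delta)$ is $\epsilon_i = a\Delta_2 f/\sigma_s$, which, after substituting the bound on $\Delta_2 f$ and the expression for $\sigma_s$, matches Equation~(\ref{equ:epsilon}) exactly.

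The only subtlety I foresee is justifying that the adversarial worker $i$ (or any other eavesdropper at $i$'s receiver) really sees only $v_i^{(t)}$ and nothing that can shortcut the Gaussian-mechanism reasoning: in particular, the channel noise $m_i^{(t)}$ is generated at the receiver side and needs to be counted as effective privacy noise, which is legitimate here because it is independent of the inputs and of the $\mathcal{G}_k^{(t)}$'s. Everything else is a routine bookkeeping exercise once the sensitivity calculation above is pinned down.
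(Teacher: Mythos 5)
Your proposal is correct and follows essentially the same route as the paper: decompose $v_i^{(t)}$ into the query $c\sum_{k\neq i}x_k^{(t)}$ plus independent Gaussian noise of variance $\sum_{k\neq i}|h_k|^2\beta_k P_k\sigma^2+\sigma_m^2$, bound the sensitivity by $2c\gamma g_{\max}^{(t)}$ via the triangle inequality on the clipped gradients, and apply the Gaussian-mechanism lemma. Your added remarks (independence of $x_{k^\star}^{(t-\frac12)}$ from the differing record, counting the receiver-side channel noise as privacy noise) only make explicit what the paper leaves implicit.
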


\begin{proof}
	We first bound the $L2$-sensitivity of $y_i^{(t)} = \sum_{k\neq i}|h_k|{\sqrt{\alpha_kP_k}}x_{k}^{(t)} = c\sum_{k\neq i}x_{k}^{(t)}$. For any other worker $j\neq i$, consider any two different local datasets $D_j$ and $D_j'$. When fixing the datasets of other $(N-1)$ workers, the local sensitivity of $y_i^{(t)}$ can be bounded as:
	\begin{equation*}
		\begin{aligned}
			\Delta_{i,j}^{(t)} &= \max_{D_j,D_j'}\|y_i^{(t)}(D_j) - y_i^{(t)}(D_j')\|_2 \\&= \max_{D_j,D_j'}\left\|c\gamma\left(g_j^{(t)}(D_j) - g_j^{(t)}(D_j')\right)\right\|_2\\
			&\leq c\gamma \max_{D_j,D_j'} \left(\|g_j^{(t)}(D_j)\|_2 + \|g_j^{(t)}(D_j)\|_2\right)\\
			&\leq 2c\gamma g_{\max}^{(t)} = 2\gamma g_{\max}^{(t)}\sqrt{\min_j|h_j|^2P_j}.
		\end{aligned}
	\end{equation*}
	
	In the final inequality, we use the fact that $g_{\max}^{(t)} = \max_{k} \|g_k^{(t)}\|_2$. And in the final equality, we follow from Eqt.~(\ref{equ:c}). Then by applying Lemma \ref{lemma:Gaussian}, Eqt.~(\ref{equ:epsilon}) can be obtained. 
\end{proof}

In order to fully demonstrate the advantages of our strategy over traditional strategies that use the wired channel, we further upper bound $\epsilon_i$.
\begin{remark}
	We can further bound the achievable $\epsilon_i$ in Theorem~\ref{theorem:privacy} as follows:
	\begin{equation*}
		\begin{aligned}
			\epsilon_i &= \frac{2\gamma g_{\max}^{(t)}\sqrt{\min_j|h_j|^2P_j}}{\sqrt{\sum_{k \neq i}|h_k|^2{\beta_{k}P_k}\sigma^2 + \sigma_m^2}}\sqrt{2\ln \frac{1.25}{\delta}}\\
			&\leq  \frac{1}{\sqrt{N-1}} \times \frac{2\gamma g_{\max}^{(t)}\sqrt{\min_j|h_j|^2P_j}}{\sqrt{\min_{k\neq i}|h_k|^2{\beta_{k}P_k}\sigma^2 + \sigma_m^2}}\sqrt{2\ln \frac{1.25}{\delta}}.
		\end{aligned}
	\end{equation*}
\end{remark}

It can be shown that the per-user privacy level behaves like $\mathcal{O}(1/\sqrt{N})$, which means that the privacy budget decays with the number of total workers $N$. Meanwhile, the privacy budget of wired peer to peer channel from $j$ to $i$ with orthogonal transmission can be shown to be:
\begin{equation*}
	\begin{aligned}
		\epsilon_{j\rightarrow i} = \frac{2\gamma g_{\max}^{(t)}\sqrt{|h_j|^2P_j}}{\sqrt{|h_j|^2{\beta_{j}P_j}\sigma^2 + \sigma_m^2}}\sqrt{2\ln \frac{1.25}{\delta}},\\
	\end{aligned}
\end{equation*}
which does not decay with $N$.

\subsection{Analysis on Convergence Rate}

In this part we focus on analyzing the convergence rate of our algorithm. Particularly, we try to find an upper bound of $\frac{1}{T}\sum_{t=0}^{T-1}\mathbb{E} \|\nabla f( \overline{x}_{t-\frac{1}{2}})\|$, which is the average global gradient. We show that the algorithm DWFL attains a convergence rate of $\mathcal{O}(\sqrt{\frac{1}{TN}})$ while keeping the merit of noise resistance.

First, we introduce some assumptions that are commonly used in the analysis of distributed learning.

\begin{asp}\label{assumptions}
	\textit{These are assumptions that we will use in our following analysis:}\\
	\textbf{(1) Lipschitzian Condition:} \textit{All function $f_i(\cdot)$'s are with $L$-Lipschitzian gradients. That is to say:}
	$$\| \nabla f_{i}(x) - \nabla f_{i}(y)\| \leq L\|x - y\|.$$
	\textbf{(2) Bounded variance:} \textit{The variance of stochastic gradient is bounded as follows.}
	$$\mathbb{E}_{\xi\sim D_{i}}\|\nabla F_{i}(x;\xi) - \nabla f_{i}(x)\|^{2} \leq \sigma_{f}^{2}$$
	$$\frac{1}{n}\sum_{i=1}^{n}\|\nabla f_{i}(x) - \nabla f(x)\|^{2} \leq \zeta^{2} ,\quad \forall i,\forall x$$
\end{asp}

Next we show the detailed mathematical analysis.

\begin{lemma}\label{le:vector operations}
	Under \textbf{Assumption~\ref{assumptions}}, we have:
	$$f(x) - f(y) \leq \nabla f(y)^\top(x-y)+\frac{L}{2}\|x-y\|^2.$$
\end{lemma}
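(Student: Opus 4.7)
The plan is to prove this as the standard quadratic upper bound (often called the descent lemma) that follows from $L$-Lipschitz smoothness. First I would note that although Assumption~\ref{assumptions} states the Lipschitz gradient condition on each $f_i$, the global objective $f = \frac{1}{n}\sum_i f_i$ inherits this property with the same constant $L$, because $\|\nabla f(x) - \nabla f(y)\| \le \frac{1}{n}\sum_i \|\nabla f_i(x) - \nabla f_i(y)\| \le L\|x-y\|$ by the triangle inequality. So it suffices to prove: for any $L$-smooth function $f$, the quadratic upper bound in the statement holds.

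The main approach is to apply the fundamental theorem of calculus along the line segment from $y$ to $x$. Define $\phi(t) = f(y + t(x-y))$ for $t \in [0,1]$, so that $\phi'(t) = \nabla f(y + t(x-y))^\top (x-y)$ and $f(x) - f(y) = \phi(1) - \phi(0) = \int_0^1 \phi'(t)\, dt$. Then I would subtract the linear term and rewrite
\begin{equation*}
f(x) - f(y) - \nabla f(y)^\top (x-y) = \int_0^1 \bigl[\nabla f(y + t(x-y)) - \nabla f(y)\bigr]^\top (x-y)\, dt.
\end{equation*}

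Next I would bound the integrand using Cauchy--Schwarz followed by the Lipschitz condition: the bracketed factor has norm at most $L \cdot \|t(x-y)\| = Lt\|x-y\|$, so the integrand is bounded by $Lt\|x-y\|^2$. Integrating $\int_0^1 Lt\|x-y\|^2\, dt = \frac{L}{2}\|x-y\|^2$ yields the claim. There is no real obstacle here; the only mild care needed is that the Lipschitz assumption is stated for the local $f_i$'s, so I must first justify lifting it to $f$, after which the proof is a few lines of the standard integral estimate.
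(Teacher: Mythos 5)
Your proof is correct and follows essentially the same route as the paper's: the fundamental theorem of calculus along the segment from $y$ to $x$, then Cauchy--Schwarz and the Lipschitz gradient bound on the integrand, and integration of $\psi L\|x-y\|^2$ to obtain $\frac{L}{2}\|x-y\|^2$. Your extra remark that the Lipschitz condition must first be lifted from the $f_i$'s to $f=\frac{1}{n}\sum_i f_i$ via the triangle inequality is a small point the paper glosses over, but it does not change the argument.
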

	\begin{proof}
		For vectors $x$ and $y$, we have
		\begin{equation*}
			\begin{aligned}
				f(x) &= f(y) + \int_{0}^{1}\langle\nabla f(y)+\psi(x-y),(x-y)\rangle d \psi \\
				&=f(y) + \langle \nabla f(y),x-y \rangle\\
				& +\int_{0}^{1} \langle \nabla f(y+\psi(x-y))-\nabla f(y),x-y \rangle d\psi
			\end{aligned}
		\end{equation*}
		Therefore,
		\begin{equation*}
			\begin{aligned}
				&~~~~~f(x) - f(y)-\langle \nabla f(y),x-y \rangle  \\
				& =\int_{0}^{1} \langle \nabla f(y+\psi(x-y))-\nabla f(y),x-y \rangle d\psi\\
				&\leq \bigg | \int_{0}^{1} \langle \nabla f(y+\psi(x-y))-\nabla f(y),x-y \rangle d\psi \bigg|\\
				&\leq  \int_{0}^{1} \bigg | \langle \nabla f(y+\psi(x-y))-\nabla f(y),x-y \rangle \bigg|d\psi \\
				&\leq  \int_{0}^{1} \|  \nabla f(y+\psi(x-y))-\nabla f(y)\| \cdot \|x-y \| d\psi \\
			\end{aligned}
		\end{equation*}
		According to \textbf{Assumption~\ref{assumptions}}, the \textbf{Lipschitzian Condition} holds. Then it can be obtained that
		\begin{equation*}
			\begin{aligned}
				&~~~~~f(x) - f(y)-\langle \nabla f(y),x-y \rangle  \\
				&\leq \int_{0}^{1} \psi L \|x-y\|^2 d\psi\\
				&=\frac{L}{2}\|x-y\|^2
			\end{aligned}
		\end{equation*}
		Thus,
		$$f(x) - f(y) \leq \nabla f(y)^\top(x-y)+\frac{L}{2}\|x-y\|^2.$$.
	\end{proof}
	{The following {Lemma \ref{le:matrix1}} and {Lemma \ref{le:matrix2}} were proved in~\cite{21DBLP:journals/corr/abs-1907-07346}.}

\begin{lemma}\label{le:matrix1} For any matrix sequence ${M_{t}}$, and positive integer constant $m\in 
	\{1, 2,...\}$, we have
	\begin{align*}
		~~~~&\sum_{t=0}^{T-1}\|
		\sum_{s=0}^{t}M_{s}(W_{eff} - I)^{m}W_{eff}^{t-s}\|_{F}^{2}\\
		&\leq (1-\lambda_{n})^{2m-2}\sum_{t=0}^{T-1}\|M_{t}\|_{F}^{2}.
	\end{align*}
\end{lemma}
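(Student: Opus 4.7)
The plan is to diagonalize $W_{eff}$ and reduce the matrix convolution on the left-hand side to a family of independent scalar convolutions indexed by the eigenvalues of $W_{eff}$, each of which can then be controlled by a standard discrete Young inequality.

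First, I would write $W_{eff} = U\Lambda U^\top$ with $\Lambda = \mathrm{diag}(\lambda_1,\ldots,\lambda_n)$ and $1 = \lambda_1 \geq \lambda_2 \geq \cdots \geq \lambda_n$, and set $N_s := M_s U$. Since the Frobenius norm is unitarily invariant and $\Lambda$ is diagonal, the sum decouples across eigencomponents:
\begin{equation*}
\left\|\sum_{s=0}^{t} M_s (W_{eff}-I)^{m} W_{eff}^{t-s}\right\|_F^{2} = \sum_{i=1}^{n} (\lambda_i-1)^{2m}\left\|\sum_{s=0}^{t} \lambda_i^{t-s}\,(N_s)_{:,i}\right\|_{2}^{2},
\end{equation*}
and the mode $i=1$ vanishes because $(\lambda_1-1)^{2m}=0$. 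This turns a matrix-valued estimate into $n-1$ scalar-weighted problems.

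Next, for each $i \geq 2$, I would recognize the inner sum as a one-sided convolution of the sequence $\{(N_s)_{:,i}\}$ with the scalar kernel $\lambda_i^{k}\mathbf{1}_{k\geq 0}$, so the discrete Young inequality $\|f*g\|_{\ell^{2}}\leq \|f\|_{\ell^{1}}\|g\|_{\ell^{2}}$ yields
\begin{equation*}
\sum_{t=0}^{T-1}\left\|\sum_{s=0}^{t} \lambda_i^{t-s}(N_s)_{:,i}\right\|_{2}^{2} \leq \frac{1}{(1-|\lambda_i|)^{2}}\sum_{t=0}^{T-1}\|(N_t)_{:,i}\|_{2}^{2}.
\end{equation*}
Re-summing over $i$ and using $\sum_i\sum_t\|(N_t)_{:,i}\|_{2}^{2} = \sum_t\|N_t\|_F^{2}=\sum_t\|M_t\|_F^{2}$ (unitary invariance again) bounds the original quantity by $\max_{i\geq 2}\frac{(1-\lambda_i)^{2m}}{(1-|\lambda_i|)^{2}}\cdot\sum_{t}\|M_t\|_F^{2}$.

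Finally, the claimed constant $(1-\lambda_n)^{2m-2}$ emerges from bounding this maximum. When $\lambda_i \in [0,1)$ one has $(1-|\lambda_i|)^{2}=(1-\lambda_i)^{2}$, so the ratio collapses to $(1-\lambda_i)^{2m-2}$, which is monotone in $\lambda_i$ and thus maximized at $\lambda_i=\lambda_n$. The main obstacle will be the case $\lambda_i < 0$: there $|\lambda_i|=-\lambda_i$, so the denominator $(1-|\lambda_i|)^{2}$ produced by Young's inequality no longer cancels cleanly with $(1-\lambda_i)^{2m}$ to give $(1-\lambda_i)^{2m-2}$. To close this gap I would either (i) invoke a structural assumption on the effective mixing matrix used in Algorithm~\ref{Algo1}, namely $\Psi = (1-\eta)I+\eta W$ with $\eta$ chosen so that $\Psi$ has non-negative spectrum, which is exactly the regime in which the averaging step behaves as a contraction toward the consensus subspace; or (ii) replace Young's inequality by a sharper Parseval-based estimate on the circle, using that the transfer function $(1-\lambda_i z^{-1})^{-1}$ can be bounded by $(1-\lambda_i)^{-1}$ in the relevant regime. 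Once the per-eigencomponent bound is in place, the remaining argument is purely algebraic bookkeeping.
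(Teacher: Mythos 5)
The paper never proves this lemma itself; it only cites \cite{21DBLP:journals/corr/abs-1907-07346}, so there is no in-paper argument to compare against. Your route (orthogonal diagonalization $W_{eff}=U\Lambda U^{\top}$, passing to $N_s=M_sU$, using unitary invariance of $\|\cdot\|_F$ to decouple into eigencomponents, killing the $\lambda_1=1$ mode, and applying the discrete Young inequality $\|f*g\|_{\ell^2}\le\|f\|_{\ell^1}\|g\|_{\ell^2}$ to each causal geometric convolution) is the standard argument for results of this type, and every step you write is correct; re-summing does give the constant $\max_{i\ge 2}(1-\lambda_i)^{2m}/(1-|\lambda_i|)^{2}$.

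The obstacle you flag at negative eigenvalues is genuine, and it is worth recording that it is not merely a gap in your proof but a failure of the lemma as stated. Take $m=1$, $T\to\infty$, and a rank-one $M_0=vu^{\top}$ with $u^{\top}W_{eff}=\lambda u^{\top}$ and all other $M_t=0$: the left side equals $(1-\lambda)^2\sum_{t\ge 0}\lambda^{2t}\,\|M_0\|_F^2=\frac{1-\lambda}{1+\lambda}\|M_0\|_F^2$, which exceeds the claimed right side $\|M_0\|_F^2$ whenever $\lambda<0$. Consequently your option (ii) cannot close the gap: by Plancherel the sharp $\ell^2\to\ell^2$ norm of convolution with $\lambda^{k}\mathbf{1}_{k\ge 0}$ is exactly $(1-|\lambda|)^{-1}$ (the minimum of $|1-\lambda e^{-i\theta}|$ is $1-|\lambda|$), so no sharper transfer-function bound exists. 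Option (i) is therefore the necessary fix, not just a convenient one: the lemma requires $W_{eff}$ to have non-negative spectrum. In this paper $W=((1)_N-I)/(N-1)$ has smallest eigenvalue $-1/(N-1)$, so $\Psi=(1-\eta)I+\eta W$ has smallest eigenvalue $1-\eta N/(N-1)$, and the hidden hypothesis is $\eta\le (N-1)/N$ — a restriction that neither the lemma statement nor the surrounding text records. With that assumption added, your argument is complete and, unlike the paper, actually self-contained.
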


\begin{lemma}\label{le:matrix2} For any matrix sequence ${M_{t}}$, we have
	$$
	\sum_{t=0}^{T-1}\|
	\sum_{s=0}^{t}M_{s}(I-A_{n})^{m}W_{eff}^{t-s}\|_{F}^{2}
	\leq \displaystyle\frac{1}{(1-\lambda_{2})^{2}}\sum_{t=0}^{T-1}\|M_{t}\|_{F}^{2}.
	$$
\end{lemma}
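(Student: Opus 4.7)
My plan is to view the inner sum as a discrete convolution of the matrix sequence $\{M_s\}$ against the kernel $K_t := (I-A_n)^m W_{eff}^{t}$, so that the claim reduces to the standard $\ell^2$ bound for convolutions (Young's inequality) combined with a spectral-gap estimate on $\|K_t\|_{\mathrm{op}}$.

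The key algebraic observation is that $A_n$, which from the context is the averaging matrix $\left(\tfrac{1}{N}\right)_N$, is an orthogonal projection onto the all-ones direction, so $I - A_n$ is idempotent and hence $(I-A_n)^m = I - A_n$ for every $m \ge 1$. Because $W_{eff}$ commutes with $A_n$ (being a polynomial in a doubly stochastic mixing matrix), its spectrum splits into the eigenvalue $1$ on the range of $A_n$ and the remaining eigenvalues $\lambda_2,\ldots,\lambda_N$ on the range of $I - A_n$. Therefore $(I-A_n)W_{eff}^{t}$ coincides with $W_{eff}^{t}$ restricted to the orthogonal complement of the all-ones direction, and its operator norm is bounded by $\lambda_2^{t}$, giving $\|K_t\|_{\mathrm{op}} \le \lambda_2^{t}$.

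Next, I would use the submultiplicativity $\|MK\|_F \le \|M\|_F \|K\|_{\mathrm{op}}$ together with the triangle inequality to get $\bigl\|\sum_{s \le t} M_s K_{t-s}\bigr\|_F \le \sum_{s \le t} \|M_s\|_F \, \lambda_2^{t-s}$. Squaring and summing over $t$ yields the squared $\ell^2$ norm of the discrete convolution of $\{\|M_s\|_F\}$ with $\{\lambda_2^{s}\}$; by Young's convolution inequality (or a direct Cauchy--Schwarz plus geometric-series argument), this is at most $\bigl(\sum_{s \ge 0} \lambda_2^{s}\bigr)^{2} \sum_{t} \|M_t\|_F^{2} = \frac{1}{(1-\lambda_2)^2}\sum_{t} \|M_t\|_F^{2}$, matching the claimed bound.

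The main obstacle is the spectral-gap estimate $\|(I-A_n)W_{eff}^{t}\|_{\mathrm{op}} \le \lambda_2^{t}$: it relies on $W_{eff}$ being symmetric (or at least normal) with the all-ones vector as its Perron eigenvector and all other eigenvalues lying in $[-\lambda_2, \lambda_2]$. Since these are standard structural assumptions on the effective mixing matrix used throughout the referenced work \cite{21DBLP:journals/corr/abs-1907-07346}, I would invoke them rather than reprove them here. A secondary subtlety is the direction of multiplication in the Frobenius-norm submultiplicativity: because the kernel multiplies $M_s$ on the right, one must use $\|MK\|_F \le \|M\|_F \|K\|_{\mathrm{op}}$ rather than the left-hand version, which keeps $W_{eff}$ on the spectral side and makes the $\lambda_2$-decay directly applicable.
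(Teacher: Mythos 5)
Your proposal is correct, but note that the paper itself gives no proof of this lemma: it is imported verbatim from \cite{21DBLP:journals/corr/abs-1907-07346}, so there is no in-paper argument to compare against. Your route --- reading the inner sum as a convolution of $\{M_s\}$ with the kernel $K_t=(I-A_n)^mW_{eff}^t$, using idempotence of $I-A_n$ to kill the exponent $m$, bounding $\|K_t\|_{\mathrm{op}}$ by the spectral gap, and finishing with $\|MK\|_F\le\|M\|_F\|K\|_{\mathrm{op}}$ plus Young's inequality $\|a*b\|_{\ell^2}\le\|a\|_{\ell^2}\|b\|_{\ell^1}$ with $\|b\|_{\ell^1}=\sum_{s\ge0}\lambda_2^s=\frac{1}{1-\lambda_2}$ --- is the standard and correct way to establish exactly this bound, and you correctly handle the one trap (the kernel multiplies $M_s$ on the right, so the right-sided submultiplicativity is the one needed). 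The only point requiring care is the step $\|(I-A_n)W_{eff}^t\|_{\mathrm{op}}\le\lambda_2^t$: what the restriction to $\mathbf{1}^{\perp}$ actually gives is $\max_{i\ge2}|\lambda_i|^t$, so the stated constant $\frac{1}{(1-\lambda_2)^2}$ is only valid under the additional assumption $\max_{i\ge2}|\lambda_i|\le\lambda_2$ (in particular $\lambda_2\ge0$), which you explicitly flag as a structural hypothesis on $W_{eff}$. In the concrete setting of this paper, $W=\frac{(1)_N-I}{N-1}$ has eigenvalues $1$ and $-\frac{1}{N-1}$, so $W_{eff}=(1-\eta)I+\eta W$ has the single non-unit eigenvalue $1-\frac{\eta N}{N-1}$, and the hypothesis holds whenever $\eta\le\frac{N-1}{N}$; this is consistent with the paper's later substitution $\frac{1}{(1-\lambda_2)^2}=\frac{(N-1)^2}{\eta^2N^2}$. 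So your proof is sound and, arguably, more informative than the paper's treatment, since it makes explicit the spectral assumptions under which the cited inequality is actually true.
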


	In the above {Lemma \ref{le:matrix1}} and {Lemma \ref{le:matrix2}}, $W$ is a weight matrix and $W_{eff}=(1-\eta)I+\eta W$. $\lambda_{n}=\lambda_{n}(W_{eff})$ and $I$ is an identity matrix.
	\begin{lemma}\label{le:sum f}
		Under \textbf{Assumption~\ref{assumptions}}, setting $\gamma L \leq 1$, we have
		\begin{equation*}
			\begin{aligned}
				\frac{\gamma}{2}&\sum_{t=0}^{T-1}\mathbb{E}\left\|\nabla f(\overline{x}_{t-\frac{1}{2}})\right\|^{2} \leq \mathbb{E}f(\overline{x}_{-\frac{1}{2}}) - \mathbb{E}f({x}^{*})\\
				&+\frac{\gamma L^{2}}{2N}\sum_{t=0}^{T-1}\mathbb{E}\left\|X_{t-\frac{1}{2}}\left(I - \left(\frac{1}{N}\right)_N\right)\right\|_{F}^{2} + \frac{LT\gamma^{2}\sigma_{f}^{2}}{2N},
			\end{aligned}
		\end{equation*}
		where $x^*=\overline{x}_{T-\frac{1}{2}}$ .
	\end{lemma}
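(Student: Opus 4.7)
The plan is to derive a one-step descent inequality for the averaged iterate $\overline{x}_{t-\frac{1}{2}}$ and then sum/telescope in $t$. The starting point is the global update rule already derived above the lemma, namely
$\overline{x}_{t+\frac{1}{2}} = \overline{x}_{t-\frac{1}{2}} - \gamma \overline{G}_t$, which is a standard SGD step with an aggregated stochastic gradient. Applying Lemma~\ref{le:vector operations} (the $L$-smoothness inequality) to $f$ with $x = \overline{x}_{t+\frac{1}{2}}$ and $y = \overline{x}_{t-\frac{1}{2}}$ gives, after taking expectation,
\begin{equation*}
\mathbb{E}f(\overline{x}_{t+\frac{1}{2}}) - \mathbb{E}f(\overline{x}_{t-\frac{1}{2}}) \le -\gamma\,\mathbb{E}\langle \nabla f(\overline{x}_{t-\frac{1}{2}}),\overline{G}_t\rangle + \tfrac{L\gamma^2}{2}\mathbb{E}\|\overline{G}_t\|^2.
\end{equation*}

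Next, I would split the cross term and the quadratic term using the unbiasedness of the stochastic gradients. Conditioning on the history up to round $t$, $\mathbb{E}[\overline{G}_t] = \nabla\overline{f}(X_{t-\frac{1}{2}})$, and by independence across workers together with \textbf{Assumption~\ref{assumptions}(2)}, $\mathbb{E}\|\overline{G}_t - \nabla\overline{f}(X_{t-\frac{1}{2}})\|^2 \le \sigma_f^2/N$. I would then use the polarization identity
$2\langle a,b\rangle = \|a\|^2 + \|b\|^2 - \|a-b\|^2$ with $a=\nabla f(\overline{x}_{t-\frac{1}{2}})$ and $b=\nabla\overline{f}(X_{t-\frac{1}{2}})$ to rewrite the cross term, producing a $+\|\nabla f(\overline{x}_{t-\frac{1}{2}})\|^2$ piece (the quantity we want to bound), a $+\|\nabla\overline{f}(X_{t-\frac{1}{2}})\|^2$ piece (which cancels against $\tfrac{L\gamma^2}{2}\|\overline{G}_t\|^2$ once we use $\gamma L\le 1$), and a consensus-error piece $-\|\nabla f(\overline{x}_{t-\frac{1}{2}}) - \nabla\overline{f}(X_{t-\frac{1}{2}})\|^2$ that we will move to the right-hand side.

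The crucial bound is then on this consensus-error term. Writing $\nabla f(\overline{x}_{t-\frac{1}{2}}) - \nabla\overline{f}(X_{t-\frac{1}{2}}) = \tfrac{1}{N}\sum_{i=1}^N\bigl(\nabla f_i(\overline{x}_{t-\frac{1}{2}}) - \nabla f_i(x_i^{(t-\frac{1}{2})})\bigr)$ and applying Jensen together with the $L$-Lipschitz gradient assumption yields
\begin{equation*}
\|\nabla f(\overline{x}_{t-\frac{1}{2}}) - \nabla\overline{f}(X_{t-\frac{1}{2}})\|^2 \le \tfrac{L^2}{N}\sum_{i=1}^N\|x_i^{(t-\frac{1}{2})} - \overline{x}_{t-\frac{1}{2}}\|^2 = \tfrac{L^2}{N}\bigl\|X_{t-\frac{1}{2}}(I-(\tfrac{1}{N})_N)\bigr\|_F^2,
\end{equation*}
which is exactly the Frobenius-norm expression appearing in the statement. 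After collecting terms, the one-step bound becomes
$\mathbb{E}f(\overline{x}_{t+\frac{1}{2}}) - \mathbb{E}f(\overline{x}_{t-\frac{1}{2}}) \le -\tfrac{\gamma}{2}\mathbb{E}\|\nabla f(\overline{x}_{t-\frac{1}{2}})\|^2 + \tfrac{\gamma L^2}{2N}\mathbb{E}\|X_{t-\frac{1}{2}}(I-(\tfrac{1}{N})_N)\|_F^2 + \tfrac{L\gamma^2\sigma_f^2}{2N}$.

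Finally, summing from $t=0$ to $T-1$ telescopes the left-hand side into $\mathbb{E}f(\overline{x}_{T-\frac{1}{2}}) - \mathbb{E}f(\overline{x}_{-\frac{1}{2}})$, and using $\mathbb{E}f(\overline{x}_{T-\frac{1}{2}}) = \mathbb{E}f(x^*)$ (by the definition $x^* = \overline{x}_{T-\frac{1}{2}}$ given in the statement) and rearranging produces the claimed inequality. The step I expect to demand the most care is the use of $\gamma L \le 1$ to absorb the $\tfrac{L\gamma^2}{2}\|\nabla\overline{f}(X_{t-\frac{1}{2}})\|^2$ term generated by the quadratic piece back into the $-\gamma\|\nabla\overline{f}(X_{t-\frac{1}{2}})\|^2$ produced by the polarization identity, leaving the clean coefficient $\tfrac{\gamma}{2}$ in front of $\|\nabla f(\overline{x}_{t-\frac{1}{2}})\|^2$; once that bookkeeping is set up correctly, the remaining manipulations are routine.
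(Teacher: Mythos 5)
Your proposal is correct and follows essentially the same route as the paper's proof: apply the $L$-smoothness descent inequality to the averaged update $\overline{x}_{t+\frac{1}{2}}=\overline{x}_{t-\frac{1}{2}}-\gamma\overline{G}_t$, use unbiasedness plus the variance bound $\sigma_f^2/N$, expand the cross term via the polarization identity, control the consensus error by $\tfrac{L^2}{N}\|X_{t-\frac{1}{2}}(I-(\tfrac{1}{N})_N)\|_F^2$, drop the nonpositive $-\tfrac{\gamma}{2}(1-\gamma L)\|\nabla\overline{f}(X_{t-\frac{1}{2}})\|^2$ term using $\gamma L\le 1$, and telescope. The only nit is a factor-of-two slip in your closing remark (the polarization identity yields $-\tfrac{\gamma}{2}\|\nabla\overline{f}(X_{t-\frac{1}{2}})\|^2$, not $-\gamma\|\cdot\|^2$), but your displayed one-step bound is correct and the condition $\gamma L\le 1$ is exactly what is needed.
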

	
	\begin{proof}
		By definition, we have $\mathbb{E}(\overline{G}_{t})=\nabla\overline{f}(X_{t-\frac{1}{2}}).$ Then,
		\begin{equation*}\label{eq:E G f}
			\begin{aligned}
				&~~~~\mathbb{E}\left\|\overline{G}_{t}-\nabla\overline{f}(X_{t-\frac{1}{2}})\right\|^2\\
				&=\mathbb{E}\left(\left\|\overline{G}_{t})\right\|^2+\left\|\nabla\overline{f}(X_{t-\frac{1}{2}})\right\|^2-2\left\langle\overline{G}_{t}, \nabla\overline{f}(X_{t-\frac{1}{2}})\right\rangle\right)\\
				&=\mathbb{E}\left\|\overline{G}_{t}\right\|^2-\left\|\nabla\overline{f}(X_{t-\frac{1}{2}})\right\|^2
			\end{aligned}
		\end{equation*}
		According to the updating rules of Algorithm~\ref{Algo1} and Lemma~\ref{le:vector operations}, we have:
		
		\begin{align}\label{eq:gap}
			&\ \ \ \mathbb{E}f(\overline{x}_{t+\frac{1}{2}}) - \mathbb{E}f(\overline{x}_{t-\frac{1}{2}})\nonumber\\
			&\leq -\gamma \mathbb{E}\left\langle \overline{G}_{t}, \nabla f(\overline{x}_{t-\frac{1}{2}})\right\rangle
			+ \displaystyle\frac{L\gamma^{2}}{2}\mathbb{E}\left\|\overline{G}_{t}\right\|^{2}\nonumber\\
			&= -\gamma\mathbb{E}\left \langle \nabla \overline{f}(X_{t-\frac{1}{2}}),\nabla f(\overline{x}_{t-\frac{1}{2}})\right \rangle \nonumber\\
			&~~~~+ \displaystyle\frac{L\gamma^{2}}{2}\mathbb{E}\left\|\overline{G}_{t} - \nabla \overline{f}(X_{t-\frac{1}{2}})\right\|^{2}
			+\displaystyle\frac{L\gamma^{2}}{2}\mathbb{E}\left\|\nabla \overline{f}(X_{t-\frac{1}{2}})\right\|^{2}\nonumber\\
			&\overset{\text{(a)}}{\leq} -\gamma\mathbb{E}\left\langle \nabla \overline{f}(X_{t-\frac{1}{2}}),\nabla f(\overline{x}_{t-\frac{1}{2}})\right\rangle + \displaystyle\frac{L\gamma^{2}}{2}\mathbb{E}\| \nabla \overline{f}(X_{t-\frac{1}{2}})\|^{2} \nonumber\\ 
			&~~~~+ \displaystyle\frac{L\gamma^{2}\sigma_{f}^{2}}{2N}\nonumber\\
			&= -\displaystyle\frac{\gamma}{2}\bigg(\mathbb{E}\left\|\nabla \overline{f}(X_{t-\frac{1}{2}})\right\|^{2} + \mathbb{E}\left\|\nabla f(\overline{x}_{t-\frac{1}{2}})\right\|^{2} \nonumber\\
			&~~~~- \mathbb{E}\left\|\nabla \overline{f}(X_{t-\frac{1}{2}}) - \nabla f(\overline{x}_{t-\frac{1}{2}})\right\|^{2}\bigg)\nonumber\\
			&\ \ \ \ + \displaystyle\frac{L\gamma^{2}}{2}\mathbb{E}\| \nabla \overline{f}(X_{t-\frac{1}{2}})\|^{2} + \displaystyle\frac{L\gamma^{2}\sigma_{f}^{2}}{2N}.
		\end{align}
		
		The above inequality $(a)$ is  based on Cauchy–Schwarz inequality and Assumption~\ref{assumptions}.
		Then we bound the difference between $\nabla \overline{f}(X_{t-\frac{1}{2}}) $ and $\nabla f(\overline{x}_{t-\frac{1}{2}})$:
		\begin{align}\label{eq:f2}
			&~~~~	\mathbb{E} \left\|\nabla \overline{f}(X_{t-\frac{1}{2}}) - \nabla f(\overline{x}_{t-\frac{1}{2}})\right\|^{2}\nonumber\\
			&= \mathbb{E}\left\|\displaystyle\frac{1}{N}\sum_{i=1}^{N}\left(\nabla f_{i}(x_{i}^{(t-\frac{1}{2})}) - 
			\nabla f_{i}(\overline{x}_{t-\frac{1}{2}})\right)\right \|^{2}\nonumber\\
			&\leq \displaystyle\frac{1}{N^{2}}\mathbb{E}\left\|
			\sum_{i = 1}^{N} \left(\nabla f_{i}(x_{i}^{(t-\frac{1}{2})}) - 
			\nabla f_{i}(\overline{x}_{t-\frac{1}{2}})\right)\right\|^{2}\nonumber\\
			&\overset{\text{(b)}}{\leq} \displaystyle\frac{1}{N}\sum_{i=1}^{N}
			\mathbb{E}\left\|\nabla f_{i}(x_{i}^{(t-\frac{1}{2})}) - 
			\nabla f_{i}(\overline{x}_{t-\frac{1}{2}})\right\|^{2}\nonumber\\
			&\overset{\text{(Assumption~\ref{assumptions})}}\leq \displaystyle\frac{L^{2}}{N}\sum_{i=1}^{N}
			\mathbb{E}\left\|x_{i}^{(t-\frac{1}{2})} - \overline{x}_{t-\frac{1}{2}}\right\|^{2}\nonumber\\
			& \leq \displaystyle\frac{L^{2}}{N}\mathbb{E}\left\|
			X_{t-\frac{1}{2}}\left(I - \left(\frac{1}{N}\right)_N\right)\right\|_{F}^{2}.
		\end{align}
		
		The above inequality $(b)$ is based on Cauchy–Schwarz inequality.
		Combining Eqt.~(\ref{eq:gap}) and Eqt.~(\ref{eq:f2}), it can be obtained that
		\begin{equation*}\nonumber
			\begin{aligned}
				\mathbb{E}f(\overline{x}_{t+\frac{1}{2}}) - \mathbb{E}f(\overline{x}_{t-\frac{1}{2}})&\leq- \displaystyle\frac{\gamma}{2} 
				\mathbb{E}\left\|
				\nabla f(\overline{x}_{t-\frac{1}{2}}) \right\|^{2} 				+ \displaystyle\frac{L\gamma^{2}\sigma_{f}^{2}}{2N}\\&-
				(\displaystyle\frac{\gamma}{2} - \displaystyle\frac{L\gamma^{2}}{2})
				\mathbb{E}\left\|
				\nabla \overline{f}(X_{t-\frac{1}{2}})\right\|^{2}\\
				&+ \displaystyle\frac{\gamma L^{2}}{2N}
				\mathbb{E}\left\|
				X_{t-\frac{1}{2}}\left(I - \left(\frac{1}{N}\right)_N\right)\right\|_{F}^{2}.
			\end{aligned}
		\end{equation*}
		
		By summing up the equation above from $t = 0$ to $t = T - 1$ we get
		\begin{equation*}
			\begin{aligned}
				\frac{\gamma}{2}&\sum_{t=0}^{T-1}\mathbb{E}\left\|\nabla f(\overline{x}_{t-\frac{1}{2}})\right\|^{2} \leq \mathbb{E}f(\overline{x}_{-\frac{1}{2}}) - \mathbb{E}f({x}^{*})\\
				&+\frac{\gamma L^{2}}{2N}\sum_{t=0}^{T-1}\mathbb{E}\left\|X_{t-\frac{1}{2}}\left(I - \left(\frac{1}{N}\right)_N\right)\right\|_{F}^{2} + \frac{LT\gamma^{2}\sigma_{f}^{2}}{2N}.
			\end{aligned}
		\end{equation*}
	\end{proof}
	
	Next, we try to upper bound $\sum_{t=0}^{T-1} \mathbb{E} \left\|X_{t-\frac{1}{2}} \left(I - \left(\frac{1}{N}\right)_{N}\right)\right\|_F^2$.
	\begin{lemma}\label{le: for G}
		Under \textbf{Assumption~\ref{assumptions}}, we have
		\begin{equation*}
			\begin{aligned}
				&\sum_{t=0}^{T-1} \mathbb{E} \left\|X_{t-\frac{1}{2}} \left(I - \left(\frac{1}{N}\right)_{N}\right)\right\|_F^2 \leq\\
				& \frac{2\gamma^2 (N-1)^2}{N^2}\sum_{t=0}^{T-1}\mathbb{E} \left\|G_{t}\right \|_{F}^{2} + 2\sigma_{z}^2dNT,
			\end{aligned}
		\end{equation*}
		where $\sigma_z^2 = \max \left\{\frac{|h_k|^2\beta_{k}P_K\sigma^2}{c^2}+ \frac{\sigma_m^2}{c^2(N-1)^2}, k\in V \right \} $.
	\end{lemma}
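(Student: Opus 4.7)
The plan is to unroll the matrix recursion for $X_{t-\frac{1}{2}}$, project the result onto the mean-zero subspace via $\left(I - \left(\frac{1}{N}\right)_N\right)$, and then apply Lemma~\ref{le:matrix1} and Lemma~\ref{le:matrix2} to the resulting gradient and noise sums.

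Iterating the update rule $X_{(t+\frac{1}{2})} = (X_{t-\frac{1}{2}} - \gamma G_t)\Psi + \Phi_t(\Psi - I)$ from Eqt.~(\ref{equ:X}), starting from the initialization $X_{-\frac{1}{2}} = 0$, produces the closed form
\begin{equation*}
X_{t-\frac{1}{2}} = -\gamma\sum_{s=0}^{t-1} G_s \Psi^{t-s} + \sum_{s=0}^{t-1} \Phi_s(\Psi - I)\Psi^{t-1-s}.
\end{equation*}
Since $W$ (and with the choice $\eta = 1$, $\Psi = W = ((1)_N - I)/(N-1)$) is doubly stochastic, one has $\Psi \left(\frac{1}{N}\right)_N = \left(\frac{1}{N}\right)_N \Psi = \left(\frac{1}{N}\right)_N$, and in particular $(\Psi - I)\left(\frac{1}{N}\right)_N = 0$. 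Right-multiplying by $\left(I - \left(\frac{1}{N}\right)_N\right)$ therefore leaves the noise sum unchanged and turns $\Psi^{t-s}\left(I - \left(\frac{1}{N}\right)_N\right)$ into $\left(I - \left(\frac{1}{N}\right)_N\right)\Psi^{t-s}$ in the gradient sum. Applying $\|a+b\|_F^2 \le 2\|a\|_F^2 + 2\|b\|_F^2$ then separates the bound into a gradient-driven term and an independent noise-driven term.

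Summing both terms over $t\in\{0,\dots,T-1\}$, I would invoke Lemma~\ref{le:matrix2} on the gradient piece with $M_s = G_s$ and $m = 1$, producing the coefficient $1/(1-\lambda_2)^2$. For $\eta = 1$ the non-trivial eigenvalue of $\Psi$ equals $-1/(N-1)$, so $1/(1-\lambda_2)^2 = (N-1)^2/N^2$, and together with the factor $2\gamma^2$ this exactly reproduces the announced coefficient $2\gamma^2(N-1)^2/N^2$ in front of $\sum_t\mathbb{E}\|G_t\|_F^2$. On the noise piece I would invoke Lemma~\ref{le:matrix1} with $M_s = \Phi_s$ and $m = 1$, which yields the constant $(1-\lambda_n)^{2m-2} = 1$ in front of $\sum_t \mathbb{E}\|\Phi_t\|_F^2$. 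Finally, using that the $k$-th column of $\Phi_t$ equals $\tfrac{|h_k|\sqrt{\beta_k P_k}}{c}\mathcal{G}_k^{(t)} + \tfrac{m^{(t)}}{c(N-1)}$ with independent zero-mean Gaussian summands, the per-column expected squared norm is $d\left(\tfrac{|h_k|^2\beta_k P_k \sigma^2}{c^2} + \tfrac{\sigma_m^2}{c^2(N-1)^2}\right) \le d\sigma_z^2$, hence $\mathbb{E}\|\Phi_t\|_F^2 \le dN\sigma_z^2$ and summing over $T$ rounds supplies the term $2\sigma_z^2 dNT$.

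The main obstacle is the off-by-one between the factor $(\Psi - I)\Psi^{t-1-s}$ produced by unrolling and the $(W_{\mathrm{eff}} - I)W_{\mathrm{eff}}^{t-s}$ required by Lemma~\ref{le:matrix1}; I would resolve it by the reindexing $s' = s + 1$ with the padding convention $\tilde\Phi_0 = 0$, which only enlarges the noise sum by a vanishing boundary term. A secondary subtlety is that the channel noise $m_i^{(t)}$ inside $\Phi^{(t,i)}$ depends on the receiver $i$, so the symbol $\Phi_t$ in the matrix identity must be interpreted row by row; because the channel noises across different receivers are independent zero-mean Gaussians, the per-column variance computation still yields the announced $dN\sigma_z^2$ bound without cross-terms.
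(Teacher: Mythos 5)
Your proposal is correct and follows essentially the same route as the paper's proof: unroll the recursion in Eqt.~(\ref{equ:X}), use double stochasticity of $W$ to annihilate the projector on the noise sum, split via $\|a+b\|_F^2\le 2\|a\|_F^2+2\|b\|_F^2$, apply Lemma~\ref{le:matrix1} and Lemma~\ref{le:matrix2}, and bound $\mathbb{E}\|\Phi_t\|_F^2\le dN\sigma_z^2$ columnwise. In fact you are slightly more careful than the paper on two points it glosses over --- that the stated coefficient $(N-1)^2/N^2$ presumes $\eta=1$ so that $\lambda_2(\Psi)=-1/(N-1)$, and that $\Phi^{(t,i)}$ depends on the receiver $i$ --- and your handling of both is sound.
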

	
	\begin{proof}
		According to Eqt.~(\ref{equ:X}), we can get
		\begin{equation*}
			\begin{aligned}	
				&~~~~X_{t-\frac{1}{2}}\left(I - \left(\frac{1}{N}\right)_{N}\right) \\
				&= -\gamma \sum_{s=0}^{t-1}G_{s}\Psi^{t-s}\left(I - \left(\frac{1}{N}\right)_{N}\right) \\
				&~~~~+ \sum_{s=0}^{t-1}\Phi_{s}\Psi^{t-s-1}(\Psi-I)\left(I - \left(\frac{1}{N}\right)_{N}\right)\\
				&=-\gamma \sum_{s=0}^{t-1}G_{s}\Psi^{t-s}\left(I - \left(\frac{1}{N}\right)_{N}\right) \\
				&~~~~+ \sum_{s=0}^{t-1}\Phi_{s}\Psi^{t-s-1}(\Psi-I)\\
				&~~~~- \sum_{s=0}^{t-1}\Phi_{s}\Psi^{t-s-1}(\Psi-I)\left(\frac{1}{N}\right)_{N}\\
				&=-\gamma \sum_{s=0}^{t-1}G_{s}\Psi^{t-s}\left(I - \left(\frac{1}{N}\right)_{N}\right) \\
				&~~~~+ \sum_{s=0}^{t-1}\Phi_{s}\Psi^{t-s-1}(\Psi-I)\\
				&~~~~- \sum_{s=0}^{t-1}\eta\Phi_{s}\Psi^{t-s-1}(W-I)\left(\frac{1}{N}\right)_{N}\\
				&=-\gamma \sum_{s=0}^{t-1}G_{s}\Psi^{t-s}\left(I - \left(\frac{1}{N}\right)_{N}\right) \\
				&~~~~+ \sum_{s=0}^{t-1}\Phi_{s}\Psi^{t-s-1}(\Psi-I)\\
				&~~~~-\sum_{s=0}^{t-1}\eta\Phi_{s}\Psi^{t-s-1}\left[W\left(\frac{1}{N}\right)_{N}-\left(\frac{1}{N}\right)_N\right].\\
			\end{aligned}
		\end{equation*}
		
		As $W = \frac{(1)_{N}- I}{N-1}$, the matrix $W$ is doubly stochastic. Then,
		\begin{equation*}
			\begin{aligned}
				X_{t-\frac{1}{2}}\left(I - \left(\frac{1}{N}\right)_{N}\right) &=-\gamma \sum_{s=0}^{t-1}G_{s}\Psi^{t-s}\left(I - \left(\frac{1}{N}\right)_{N}\right) \\
				&~~~~+ \sum_{s=0}^{t-1}\Phi_{s}\Psi^{t-s-1}(\Psi-I).
			\end{aligned}
		\end{equation*}
		
		Hence,according to Cauchy–Schwarz inequality,
		
		\begin{align}\label{eq:X_t}
			&~~~~	\left	\|X_{t-\frac{1}{2}}\left(I - \left(\frac{1}{N}\right)_{N}\right) \right\|_{F}^{2}\nonumber \\
			&\leq 2\gamma^{2}\left\|
			\sum_{s = 0}^{t-1}G_{s}\Psi^{t-s}\left(I - \left(\frac{1}{N}\right)_{N}\right) \right\|_{F}^{2}\nonumber \\
			&~~~~+2\left\|\sum_{s=0}^{t-1}\Phi_{s}(\Psi - I)\Psi^{t-s-1}\right\|_{F}^{2}
		\end{align}
		
		According to Eqt.~(\ref{eq:X_t}), we have
		\begin{equation*}
			\begin{aligned}
				&\ \ \ \sum_{t=0}^{T-1}\mathbb{E}
				\left\|X_{t-\frac{1}{2}}\left(I - \left(\frac{1}{N}\right)_N\right)\right\|_{F}^{2}\\
				&\leq 2\gamma^{2}\sum_{t=0}^{T-1}\mathbb{E}\left\|
				\sum_{s = 0}^{t-1}G_{s}\Psi^{t-s}\left(I-\left(\frac{1}{N}\right)_N\right)\right\|_{F}^{2}\\
				&\ \ +2 \sum_{t=0}^{T-1}\mathbb{E}
				\left\|\sum_{s=0}^{t-1}\Phi_{s}(\Psi - I)\Psi^{t-s-1}\right\|_{F}^{2}\\
				&\leq 2\gamma^{2}\sum_{t=0}^{T-1}\mathbb{E}\left\|
				\sum_{s = 0}^{t}G_{s}\Psi^{t-s}\left(I-\left(\frac{1}{N}\right)_N\right)\right\|_{F}^{2}\\
				&\ \ +2 \sum_{t=0}^{T-2}\mathbb{E}
				\left\|\sum_{s=0}^{t}\Phi_{s}(\Psi - I)\Psi^{t-s}\right\|_{F}^{2}.\\
			\end{aligned}
		\end{equation*}
		
		By Lemma~\ref{le:matrix1} and Lemma~\ref{le:matrix2}, it can be obtained that
		\begin{equation*}
			\begin{aligned}
				&~~~~\sum_{t=0}^{T-1} \mathbb{E} \left\|X_{t-\frac{1}{2}} \left(I - \left(\frac{1}{N}\right)_{N}\right)\right\|_F^2\\
				&\leq \displaystyle\frac{2\gamma^{2}}{(1-\lambda_{2})^{2}}
				\sum_{t=0}^{T-1}\mathbb{E}\left\|G_{t}\right\|_{F}^{2} + 
				2\sum_{t=0}^{T-2}\mathbb{E}\left\|\Phi_{t}\right\|_{F}^{2}\\
				&\leq \frac{2\gamma^2 (N-1)^2}{N^2}\sum_{t=0}^{T-1}\mathbb{E}\left \|G_{t} \right\|_{F}^{2} + 2\sigma_{z}^2dNT.
			\end{aligned}
		\end{equation*}
	\end{proof}
	
	Next, we try to upper bound $\mathbb{E} \left\|G_{t} \right\|_{F}^{2}$.
	
	\begin{lemma}\label{le:G}
		Under \textbf{Assumption~\ref{assumptions}}, we have
		\begin{equation*}	
			\begin{aligned}
				\mathbb{E}\left\|
				G_{t}\right\|_{F}^{2} \leq
				&N\sigma_{f}^{2} + 3L^{2}\mathbb{E} \left\|X_{t-\frac{1}{2}} \left(I - \left(\frac{1}{N}\right)_{N}\right)\right\|_F^2+ 3N\zeta^{2}\\
				&+ 3N\mathbb{E}\left\|\nabla f(\overline{x}_{{t-\frac{1}{2}}})\right\|^{2}
			\end{aligned}
		\end{equation*}
	\end{lemma}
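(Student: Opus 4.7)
The plan is to expand $\|G_t\|_F^2 = \sum_{i=1}^N \|g_i^{(t)}\|^2$ column by column, then apply the standard ``noise plus mean'' decomposition of the stochastic gradient, followed by a triangle-type decomposition of the deterministic gradient around $\overline{x}_{t-\frac{1}{2}}$.

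First, I would write $g_i^{(t)} = (g_i^{(t)} - \nabla f_i(x_i^{(t-\frac{1}{2})})) + \nabla f_i(x_i^{(t-\frac{1}{2})})$. Because the stochastic gradient is unbiased with respect to $\xi_i^{(t)}$, the cross term vanishes in expectation, giving
\begin{equation*}
\mathbb{E}\|g_i^{(t)}\|^2 = \mathbb{E}\|g_i^{(t)} - \nabla f_i(x_i^{(t-\frac{1}{2})})\|^2 + \mathbb{E}\|\nabla f_i(x_i^{(t-\frac{1}{2})})\|^2 \leq \sigma_f^2 + \mathbb{E}\|\nabla f_i(x_i^{(t-\frac{1}{2})})\|^2,
\end{equation*}
by the bounded-variance part of Assumption~\ref{assumptions}. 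Summing over $i$ accounts for the $N\sigma_f^2$ term.

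Next I would decompose the remaining deterministic piece around $\overline{x}_{t-\frac{1}{2}}$ as
\begin{equation*}
\nabla f_i(x_i^{(t-\frac{1}{2})}) = \bigl[\nabla f_i(x_i^{(t-\frac{1}{2})}) - \nabla f_i(\overline{x}_{t-\frac{1}{2}})\bigr] + \bigl[\nabla f_i(\overline{x}_{t-\frac{1}{2}}) - \nabla f(\overline{x}_{t-\frac{1}{2}})\bigr] + \nabla f(\overline{x}_{t-\frac{1}{2}}),
\end{equation*}
and apply the elementary inequality $\|a+b+c\|^2 \leq 3(\|a\|^2+\|b\|^2+\|c\|^2)$. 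Summing the three resulting terms over $i \in [N]$: (i) the first is bounded by $3L^2\sum_i \|x_i^{(t-\frac{1}{2})} - \overline{x}_{t-\frac{1}{2}}\|^2 = 3L^2 \|X_{t-\frac{1}{2}}(I - (\tfrac{1}{N})_N)\|_F^2$ using the Lipschitz condition; (ii) the second is bounded by $3N\zeta^2$ via the inter-worker heterogeneity bound $\frac{1}{N}\sum_i \|\nabla f_i(x)-\nabla f(x)\|^2 \leq \zeta^2$; (iii) the third summed over $i$ yields exactly $3N\mathbb{E}\|\nabla f(\overline{x}_{t-\frac{1}{2}})\|^2$. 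Adding the $N\sigma_f^2$ from the first step gives the claimed bound.

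I don't anticipate a real obstacle here: the only subtlety is invoking the unbiasedness of the stochastic gradient to drop the cross term in the first decomposition, which is standard and implicit in Assumption~\ref{assumptions}. The rest is bookkeeping plus the three-term Cauchy--Schwarz/AM--QM inequality, both routine.
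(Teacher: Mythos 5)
Your proposal is correct and follows essentially the same route as the paper's proof: the bias--variance split of the stochastic gradient (with the cross term vanishing by unbiasedness), followed by the three-term decomposition of $\nabla f_i(x_i^{(t-\frac{1}{2})})$ around $\overline{x}_{t-\frac{1}{2}}$ with the inequality $\|a+b+c\|^2\leq 3(\|a\|^2+\|b\|^2+\|c\|^2)$, the Lipschitz condition, and the heterogeneity bound. No gaps.
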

	
	\begin{proof}
		\begin{equation*}
			\begin{aligned}
				&\ \ \ \ \ \mathbb{E}\left\|
				\nabla F_{i}(x_{i}^{(t-\frac{1}{2})},\xi_{i}^{(t)})\right\|^{2}\\
				&= \mathbb{E}\left\|\nabla F_{i}(x_{i}^{(t-\frac{1}{2})}) - \nabla f_{i}(x_{i}^{(t-\frac{1}{2})}) +
				\nabla f_{i}(x_{i}^{(t-\frac{1}{2})})\right\|^{2}\\
				&= \mathbb{E}\|\nabla F_{i}(x_{i}^{(t-\frac{1}{2})}) - \nabla f_{i}(x_{i}^{(t-\frac{1}{2})})\|^{2}
				+\mathbb{E}\| \nabla f_{i}(x_{i}^{(t-\frac{1}{2})})\|^{2}\\
				&{~~~~+2\mathbb{E}\left\langle  \nabla F_{i}(x_{i}^{(t-\frac{1}{2})})- \nabla f_{i}(x_{i}^{(t-\frac{1}{2})}), \nabla f_{i}(x_{i}^{(t-\frac{1}{2})})\right\rangle}\\
				&= \mathbb{E}\left\|\nabla F_{i}(x_{i}^{(t-\frac{1}{2})}) - \nabla f_{i}(x_{i}^{(t-\frac{1}{2})})\right\|^{2}
				+\mathbb{E}\left\| \nabla f_{i}(x_{i}^{(t-\frac{1}{2})})\right\|^{2}\\
				&\leq \sigma_{f}^{2} + \mathbb{E}\Big\|
				(\nabla f_{i}(x_{i}^{(t-\frac{1}{2})}) - \nabla f_{i}(\overline{x}_{t-\frac{1}{2}})) + \nabla f(\overline{x}_{t-\frac{1}{2}}) \\
				&~~~~+ (	\nabla f_{i}(\overline{x}_{t-\frac{1}{2}}) - \nabla f(\overline{x}_{t-\frac{1}{2}})
				)\Big\|^{2}\\
				&\leq \sigma_{f}^{2} +
				3\mathbb{E}\left\|\nabla f_{i}(x_{i}^{(t-\frac{1}{2})}) - \nabla f_{i}(\overline{x}_{t-\frac{1}{2}})\right\|^{2}\\
				&+ 3\mathbb{E}\left\|\nabla f_{i}(\overline{x}_{t-\frac{1}{2}}) - \nabla f(\overline{x}_{t-\frac{1}{2}})\right\|^{2}
				+ 3\mathbb{E}\left\|\nabla f(\overline{x}_{t-\frac{1}{2}})\right\|^{2}\\
				&\leq \sigma_{f}^{2} + 3L^{2}\mathbb{E}\left\| \overline{x}_{t-\frac{1}{2}} - x_{i}^{(t-\frac{1}{2})}\right\|^{2}
				+ 3\zeta^{2}
				+ 3\mathbb{E}\left\| \nabla f(\overline{x}_{t-\frac{1}{2}})\right\|^{2}.
			\end{aligned}
		\end{equation*}
		
		Thus we have
		\begin{equation*}
			\begin{aligned}
				&\ \ \ \ \mathbb{E}\left\|G(X_{t})\right\|_{F}^{2}\\ 
				&= \sum_{i = 1}^{N}\mathbb{E}\left\|
				\nabla F_{i}(x_{i}^{(t-\frac{1}{2})},\xi_{i}^{(t)})\right\|^{2}\\
				&\leq N\sigma_{f}^{2} + 3L^{2}\sum_{i = 1}^{N}\mathbb{E}\left\|\overline{x}_{t-\frac{1}{2}} - x_{i}^{(t-\frac{1}{2})}\right\|^{2}
				+ 3N\zeta^{2}\\
				&~~~~ + 3N\mathbb{E}\left\|
				\nabla f(\overline{x}_{t-\frac{1}{2}})\right\|^{2}\\
				&\leq N\sigma_{f}^{2} + 3L^{2}\mathbb{E} \left\|X_{t-\frac{1}{2}} \left(I - \left(\frac{1}{N}\right)_{N}\right)\right\|_F^2 + 3N\zeta^{2}\\
				&~~~~+ 3N\mathbb{E}\left\|\nabla f(\overline{x}_{{t-\frac{1}{2}}})\right\|^{2}
			\end{aligned}
		\end{equation*}
	\end{proof}
	
	The following theorem describes the main result on the convergence rate of our DWFL algorithm:
	
	\begin{theorem}\label{thm:main}
		Set $L \leq 1$ and $1 - 12L^{2}C_{2}>0$.
		Then by executing DWFL, we have 
		\begin{equation*}
			\begin{aligned}
				&\ \ \ (\frac{\gamma}{2} - \frac{3\gamma^{3}L^{2}C_{2}}{1 - 6C_{2}L^{2}\gamma^{2}})\sum_{t = 0}^{T-1}\mathbb{E}\left\|
				\nabla f(\overline{x}_{{t-\frac{1}{2}}})\right\|^{2}\\
				&\leq \mathbb{E}f(\overline{x}_{-\frac{1}{2}})
				- \mathbb{E}f(x^{*})
				+ \left(\frac{L\gamma^{2}}{2N}+ \displaystyle\frac{C_{2}L^{2}\gamma^{3}}{1 - 6C_{2}L^{2}\gamma^{2}}\right)\sigma_{f}^{2}T\\
				&+\frac{ 3\gamma^{3}L^{2}TC_{2}\zeta}{1 - 6C_{2}L^{2}\gamma^{2}}+\frac{\gamma L^2dT\sigma_{z}^2}{1 - 6C_{2}L^{2}\gamma^{2}},
			\end{aligned}
		\end{equation*}
		where $C_{2} = \left(\displaystyle\frac{N-1}{N}\right)^2$ and $\sigma_z^2 = \max \left\{\frac{|h_k|^2\beta_{k}P_K\sigma^2}{c^2}+ \frac{\sigma_m^2}{c^2(N-1)^2}, k\in V \right \} $.
	\end{theorem}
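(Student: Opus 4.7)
The plan is to combine the three preparatory lemmas into a single closed-form inequality for $\sum_{t=0}^{T-1} \mathbb{E}\|\nabla f(\overline{x}_{t-\frac{1}{2}})\|^2$. Lemma \ref{le:sum f} already gives an upper bound on this quantity in terms of two residual terms: the drop in objective value $\mathbb{E}f(\overline{x}_{-\frac{1}{2}}) - \mathbb{E}f(x^*)$, and the consensus error $\sum_t \mathbb{E}\|X_{t-\frac{1}{2}}(I-(\frac{1}{N})_N)\|_F^2$ (plus a benign $\sigma_f^2$ term). So the goal reduces to controlling the consensus error in terms of the gradient norm itself.

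Next, I would use Lemma \ref{le: for G} to bound the consensus error by $\frac{2\gamma^2(N-1)^2}{N^2}\sum_t \mathbb{E}\|G_t\|_F^2 + 2\sigma_z^2 dNT$, and then plug in Lemma \ref{le:G} to expand $\mathbb{E}\|G_t\|_F^2$. The bound on $\mathbb{E}\|G_t\|_F^2$ contains three pieces: a variance term $N\sigma_f^2$, a heterogeneity term $3N\zeta^2$, the objective-gradient term $3N\mathbb{E}\|\nabla f(\overline{x}_{t-\frac{1}{2}})\|^2$, and crucially a $3L^2\mathbb{E}\|X_{t-\frac{1}{2}}(I-(\frac{1}{N})_N)\|_F^2$ term that feeds back into the consensus error. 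Writing $C_2 = ((N-1)/N)^2$ and summing over $t$, this produces an implicit inequality
\begin{equation*}
\sum_{t=0}^{T-1}\mathbb{E}\left\|X_{t-\frac{1}{2}}\left(I-\left(\tfrac{1}{N}\right)_N\right)\right\|_F^2 \leq 6\gamma^2 C_2 L^2 \sum_{t=0}^{T-1}\mathbb{E}\left\|X_{t-\frac{1}{2}}\left(I-\left(\tfrac{1}{N}\right)_N\right)\right\|_F^2 + \text{(other terms)}.
\end{equation*}
Assuming the step-size condition is strong enough to guarantee $1 - 6C_2 L^2 \gamma^2 > 0$ (which the hypothesis $1 - 12 L^2 C_2 > 0$ together with $\gamma L \leq 1$ ensures), I can collect terms and solve for $\sum_t \mathbb{E}\|X_{t-\frac{1}{2}}(I-(\frac{1}{N})_N)\|_F^2$ explicitly, obtaining a bound involving $\sigma_f^2$, $\zeta^2$, $\sigma_z^2$, and the still-unknown $\sum_t \mathbb{E}\|\nabla f(\overline{x}_{t-\frac{1}{2}})\|^2$, all divided by $1 - 6C_2 L^2 \gamma^2$.

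The final step is to substitute this explicit bound back into Lemma \ref{le:sum f}. The $\sum_t \mathbb{E}\|\nabla f(\overline{x}_{t-\frac{1}{2}})\|^2$ term pulled through will appear with coefficient $\frac{3\gamma^3 L^2 C_2}{1-6C_2 L^2 \gamma^2}$ on the right-hand side, which I move to the left, yielding the claimed coefficient $\frac{\gamma}{2} - \frac{3\gamma^3 L^2 C_2}{1-6C_2 L^2 \gamma^2}$. The remaining constant terms group naturally: the $\sigma_f^2$ contributions from Lemma \ref{le:sum f} and from the $\sum_t \mathbb{E}\|G_t\|_F^2$ expansion combine into $(\frac{L\gamma^2}{2N} + \frac{C_2 L^2 \gamma^3}{1-6C_2 L^2 \gamma^2})\sigma_f^2 T$, while the heterogeneity and channel-noise terms produce $\frac{3\gamma^3 L^2 T C_2 \zeta^2}{1-6C_2 L^2 \gamma^2}$ and $\frac{\gamma L^2 dT \sigma_z^2}{1-6C_2 L^2 \gamma^2}$ respectively.

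The main obstacle is the bookkeeping in the implicit inequality for the consensus error: one must track which terms pick up the extra factor $\frac{\gamma L^2}{2N}$ from Lemma \ref{le:sum f} versus which terms already carry the $\frac{1}{1-6C_2 L^2 \gamma^2}$ factor after rearrangement, and ensure the step-size conditions $\gamma L \leq 1$ and $1 - 12 L^2 C_2 > 0$ are strong enough to make both the left-hand coefficient of $\sum_t \mathbb{E}\|\nabla f(\overline{x}_{t-\frac{1}{2}})\|^2$ positive and the denominator $1-6C_2 L^2 \gamma^2$ strictly positive. Once this is verified, the claimed inequality follows by direct substitution and grouping.
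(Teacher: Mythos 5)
Your proposal follows essentially the same route as the paper's proof: it chains Lemma \ref{le:sum f}, Lemma \ref{le: for G} and Lemma \ref{le:G}, resolves the self-referencing consensus-error inequality by collecting the $6C_2L^2\gamma^2$ feedback term and dividing by $1-6C_2L^2\gamma^2$, and then substitutes back, moving the $\frac{3\gamma^3L^2C_2}{1-6C_2L^2\gamma^2}\sum_t\mathbb{E}\|\nabla f(\overline{x}_{t-\frac{1}{2}})\|^2$ term to the left. As a minor point, the substitution indeed produces $\zeta^{2}$ (as you write) rather than the $\zeta$ appearing in the theorem statement, which is evidently a typo in the paper.
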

	\begin{proof}
		According to Lemma~\ref{le:G}, we have
		\begin{equation*}	
			\begin{aligned}
				\mathbb{E}\left\|
				G_{t}\right\|_{F}^{2} \leq
				&N\sigma_{f}^{2} + 3L^{2}\mathbb{E} \left\|X_{t-\frac{1}{2}} \left(I - \left(\frac{1}{N}\right)_{N}\right)\right\|_F^2+ 3N\zeta^{2}\\
				&+ 3N\mathbb{E}\left\|\nabla f(\overline{x}_{{t-\frac{1}{2}}})\right\|^{2}
			\end{aligned}
		\end{equation*}
		
		Then,
		\begin{align}	\label{eq:last1}
			&~~~~\sum_{t=0}^{T-1}	\mathbb{E}\left\|
			G_{t-\frac{1}{2}}\right\|_{F}^{2}\nonumber \\
			&\leq TN\sigma_{f}^{2} + 3L^{2}\sum_{t=0}^{T-1}\mathbb{E} \left\|X_{t-\frac{1}{2}} \left(I - \left(\frac{1}{N}\right)_{N}\right)\right\|_F^2\nonumber \\
			&+ 3TN\zeta^{2}+ 3N\sum_{t=0}^{T-1}\mathbb{E}\left\|\nabla f(\overline{x}_{{t-\frac{1}{2}}})\right\|^{2}
		\end{align}
		
		According to Lemma~\ref{le: for G}, we have
		\begin{align}\label{eq:last2}
			&~~~~\sum_{t=0}^{T-1} \mathbb{E} \left\|X_{t-\frac{1}{2}} \left(I - \left(\frac{1}{N}\right)_{N}\right)\right\|_F^2\nonumber\\
			& \leq  \frac{2\gamma^2 (N-1)^2}{N^2}\sum_{t=0}^{T-1}\mathbb{E} \left\|G_{t}\right \|_{F}^{2} + 2\sigma_{z}^2dNT.
		\end{align}
		
		By Eqt.~(\ref{eq:last1}) and (\ref{eq:last2}), it can be obtained that
		\begin{align}\label{eq:for th1}
			&~~~~(1 - 6L^{2}\gamma^{2}C_{2}) \sum_{t = 0}^{T - 1}\mathbb{E}\left\|
			X_{t-\frac{1}{2}}\left(I -\left(\frac{1}{N}\right)_N \right)\right\|_{F}^{2}\nonumber \\
			&\leq 2\gamma^{2}C_{2}\left(N\sigma_{f}^{2}T + 3N\zeta^{2}T + 3N\sum_{t = 0}^{T - 1}
			\mathbb{E}\left\|\nabla f(\overline{x}_{t-\frac{1}{2}})\right\|^{2} \right)\nonumber \\
			&~~~~+ 2\sigma_{z}^2dNT.
		\end{align}
		
		According to Lemma~\ref{le:sum f}, we have
		\begin{align}\label{eq:for th2}
			&~~~~\frac{\gamma}{2}\sum_{t=0}^{T-1}\mathbb{E}\left\|\nabla f(\overline{x}_{t-\frac{1}{2}})\right\|^{2}\nonumber \\
			& \leq \mathbb{E}f(\overline{x}_{-\frac{1}{2}}) - \mathbb{E}f({x}^{*})+ \frac{LT\gamma^{2}\sigma_{f}^{2}}{2N}\nonumber \\
			&~~~~+\frac{\gamma L^{2}}{2N}\sum_{t=0}^{T-1}\mathbb{E}\left\|X_{t}\left(I - \left(\frac{1}{N}\right)_N\right)\right\|_{F}^{2}.
		\end{align}
		
		Combining Eqt.~(\ref{eq:for th1}) and (\ref{eq:for th2}), it can be obtained that
		\begin{equation*}
			\begin{aligned}
				&\ \ \ (\frac{\gamma}{2} - \frac{3\gamma^{3}L^{2}C_{2}}{1 - 6C_{2}L^{2}\gamma^{2}})\sum_{t = 0}^{T-1}\mathbb{E}\left\|
				\nabla f(\overline{x}_{{t-\frac{1}{2}}})\right\|^{2}\\
				&\leq \mathbb{E}f(\overline{x}_{-\frac{1}{2}})
				- \mathbb{E}f(x^{*})
				+ \left(\frac{L\gamma^{2}}{2N}+ \displaystyle\frac{C_{2}L^{2}\gamma^{3}}{1 - 6C_{2}L^{2}\gamma^{2}}\right)\sigma_{f}^{2}T\\
				&~~~~+\frac{ 3\gamma^{3}L^{2}TC_{2}\zeta}{1 - 6C_{2}L^{2}\gamma^{2}}+\frac{\gamma L^2dT\sigma_{z}^2}{1 - 6C_{2}L^{2}\gamma^{2}}.
			\end{aligned}
		\end{equation*}
	\end{proof}
	If we choose proper step size $\gamma$, then we immediately get Corollary~\ref{cor:main}.
	\begin{corollary}\label{cor:main}
		If we set $\gamma = \displaystyle\frac{1}{\sigma_{f}}\displaystyle\sqrt{\frac{2C_{4}N}{LT}}$ and  $C_{4} = \mathbb{E}f(\overline{x}_{-\frac{1}{2}}) - \mathbb{E}f(x^{*})$, it holds that 
		\begin{equation*}
			\begin{aligned}
				\displaystyle\frac{1}{T}\sum_{t = 0}^{T-1}\mathbb{E}\left\|
				\nabla f(\overline{x}_{t-\frac{1}{2}})\right\|^{2}&\leq 2\sigma_{f}\displaystyle\sqrt{\frac{2C_{4}L}{TN}}
				+ \displaystyle\frac{20C_2C_4NLd}{T \sigma_{f}^2 \epsilon^2}\\
				&= \mathcal{O}\left(\displaystyle\sqrt{\frac{1}{TN}} \right),
			\end{aligned}
		\end{equation*}
	where $\epsilon \geq  \frac{2\gamma g_{\max}^{(t)}\sqrt{|h_j|^2P_j}}{\sigma_{z}C(N-1)}\sqrt{2\ln \frac{1.25}{\delta}}$.
	\end{corollary}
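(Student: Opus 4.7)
The plan is to substitute the chosen step size $\gamma = \frac{1}{\sigma_f}\sqrt{2C_4 N/(LT)}$ into the bound of Theorem \ref{thm:main} and simplify term by term, invoking the privacy hypothesis to convert the channel-noise variance $\sigma_z^2$ into a $1/\epsilon^2$ dependence. The first step is to observe that, since $\gamma^2 = 2C_4 N/(\sigma_f^2 LT) \to 0$ as $T$ grows, for $T$ sufficiently large we have $6C_2 L^2 \gamma^2 \leq 1/2$. This gives $1 - 6C_2 L^2\gamma^2 \geq 1/2$ and, combined with the same smallness of $\gamma^2$, lets me lower bound the LHS prefactor as $\frac{\gamma}{2} - \frac{3\gamma^3 L^2 C_2}{1-6C_2 L^2\gamma^2} \geq \frac{\gamma}{2} - 6\gamma^3 L^2 C_2 \geq \frac{\gamma}{4}$. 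Dividing both sides of Theorem \ref{thm:main}'s inequality by $\gamma T/4$ then isolates $\frac{1}{T}\sum_{t=0}^{T-1}\mathbb{E}\|\nabla f(\overline{x}_{t-\frac{1}{2}})\|^2$ on the LHS.

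Next I would evaluate each resulting RHS contribution. The leading term $4C_4/(\gamma T)$ simplifies, upon substituting the chosen $\gamma$, directly to $2\sigma_f\sqrt{2C_4 L/(TN)}$, matching the first term of the claimed bound. The piece coming from $L\gamma^2\sigma_f^2 T/(2N)$, after division by $\gamma T$, reduces to $2L\gamma \sigma_f^2/N$, which has the same $\mathcal{O}(\sqrt{1/(TN)})$ order and is absorbed into the leading term. The remaining $\gamma^3 \sigma_f^2$ and $\gamma^2 \zeta$ pieces scale as $\mathcal{O}(1/T)$ or smaller and contribute only higher-order corrections. The nontrivial piece is the channel-noise term $4L^2 d\sigma_z^2/(1-6C_2 L^2 \gamma^2)$, which I would first bound above by $8L^2 d \sigma_z^2$ using the denominator estimate.

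The key step is converting $\sigma_z^2$ into a function of $\epsilon$ using the privacy hypothesis $\epsilon \geq \frac{2\gamma g_{\max}^{(t)}\sqrt{|h_j|^2 P_j}}{\sigma_z C(N-1)}\sqrt{2\ln(1.25/\delta)}$. Taking this at equality (the tightest setting consistent with the privacy guarantee) I solve for $\sigma_z^2$, substitute $\gamma^2 = 2C_4 N/(\sigma_f^2 LT)$, and use $|h_j|^2 P_j = C^2$ in the worst case together with $(N-1)^2 = C_2 N^2$ to arrive at the $\frac{20 C_2 C_4 NLd}{T\sigma_f^2 \epsilon^2}$ term, with constants from $\ln(1.25/\delta)$ and $g_{\max}^{(t)}$ absorbed into the numerical factor $20$. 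Finally, observing that for fixed $N$ and $\epsilon$ this privacy term is $\mathcal{O}(1/T)$, it is dominated by the leading $\mathcal{O}(\sqrt{1/(TN)})$ contribution, yielding the claimed asymptotic rate.

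I expect the main obstacle to be bookkeeping rather than any deep analytic step: specifically, keeping the direction of the privacy inequality correct so that the $\sigma_z^2 \mapsto 1/\epsilon^2$ substitution produces an upper bound on the convergence rate (not a lower one), handling the $1-6C_2 L^2\gamma^2$ denominators uniformly across the several terms in which they appear, and aggregating the numerical constants so that the final prefactors $2$ and $20$ come out correctly.
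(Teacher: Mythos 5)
Your substitution-and-simplify route is exactly the paper's (non-)proof: the authors give no argument beyond ``choose the proper step size,'' and your expansion --- lower-bounding the left-hand prefactor by $\gamma/4$ via $1-6C_2L^2\gamma^2\geq 1/2$, noting that $4C_4/(\gamma T)$ and the $L\gamma^2\sigma_f^2T/(2N)$ piece each contribute a $\sigma_f\sqrt{2C_4L/(TN)}$-order term, discarding the $\zeta$ and higher-order pieces, and taking the privacy condition at equality to trade $\sigma_z^2$ for $1/\epsilon^2$ --- is the natural and essentially only way to flesh it out. One caveat on the bookkeeping you flagged: substituting $\sigma_z=2\gamma g_{\max}^{(t)}\sqrt{|h_j|^2P_j}\sqrt{2\ln(1.25/\delta)}/\big(\epsilon c(N-1)\big)$ together with $\gamma^2=2C_4N/(\sigma_f^2LT)$ into the channel-noise term produces a factor $N/(N-1)^2=1/(C_2N)$, not the $C_2N$ in the stated bound, so the corollary's privacy term cannot be literally recovered from Theorem~\ref{thm:main} with any choice of absorbed constants; this appears to be an inconsistency in the paper's statement rather than a flaw in your derivation, and it does not affect the $\mathcal{O}(\sqrt{1/(TN)})$ conclusion.
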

	
From the corollary above we can see that when $T$ goes to infinity, the dominate term is $\sqrt{\frac{1}{TN}}$ that does not include the privacy budget, which is a very desirable property showing DWFL has great advantages in noise resistance.

\section{Simulation Results}\label{sec:exp}
In this section, we evaluate the performance of our wireless decentralized learning strategy by extensive experiments. Specifically, we first test the impact of different transmitting power to the convergence rate. In addition, we test the convergence rate when the number of workers changes. We also analyze the convergence performance with different $\epsilon$. Then, we show the different converging results in orthogonal setting and our non-orthogonal setting. We also evaluate the different convergence rate between the centralized topology~\cite{20DBLP:conf/isit/SeifTL20} and our decentralized topology at the end. We set $P=60$ dBm, $\epsilon=0.5$ and the algorithm works in a non-orthogonal decentralized strategy. Based on this setting, we study the impact of different parameters or strategies.

Our experiments are conducted on a real dataset cifar-10, and we use Pytorch as our distributed framework. We use 4 GPUs of type GTX 1080Ti to conduct our experiments and each GPU is used to simulate decentralized workers.

\begin{figure}[!hbt]
    \centering
    \subfigure[10 workers]{\includegraphics[width=0.47\linewidth]{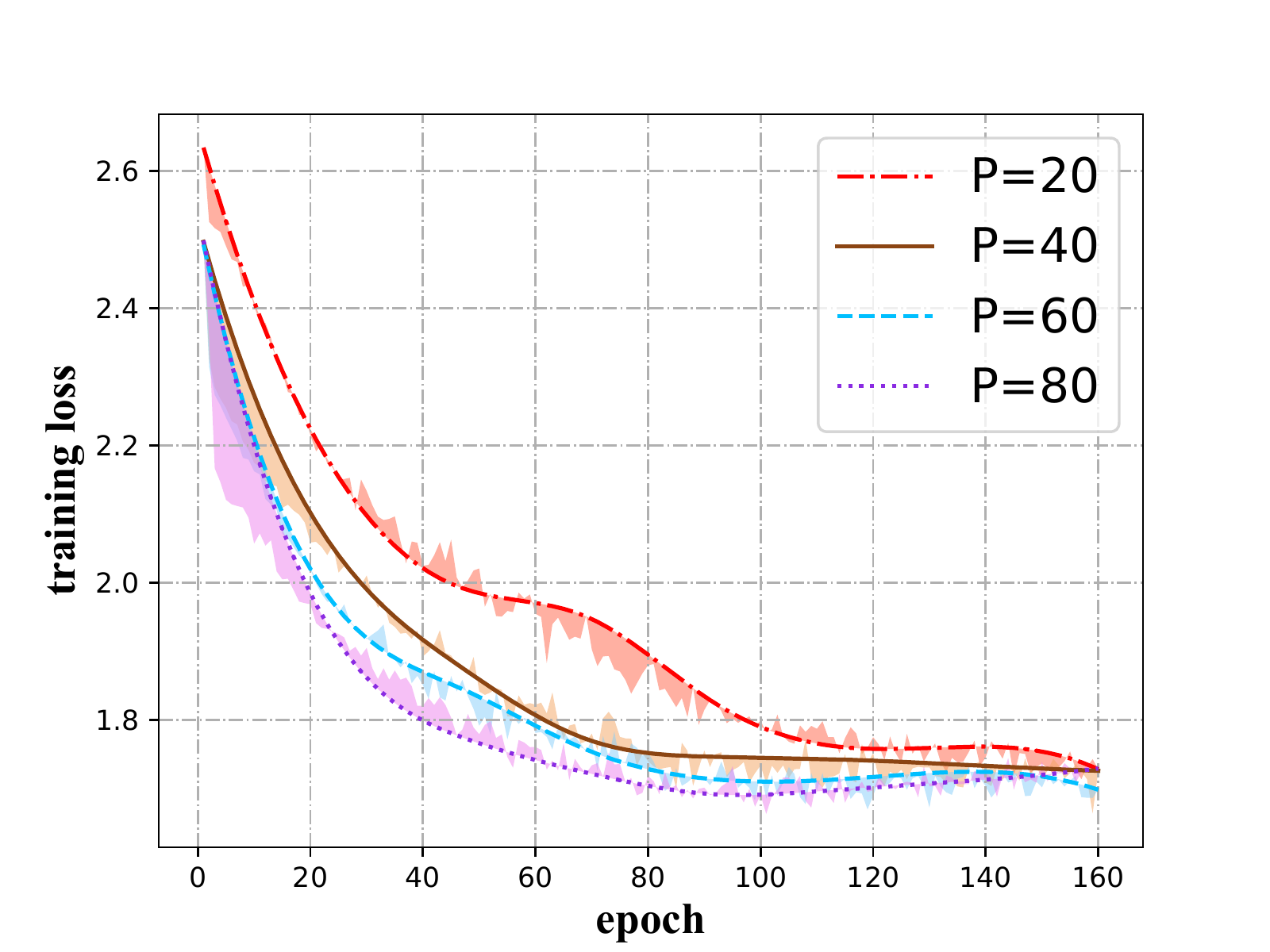}}
    \quad
     \subfigure[30 workers]{\includegraphics[width=0.47\linewidth]{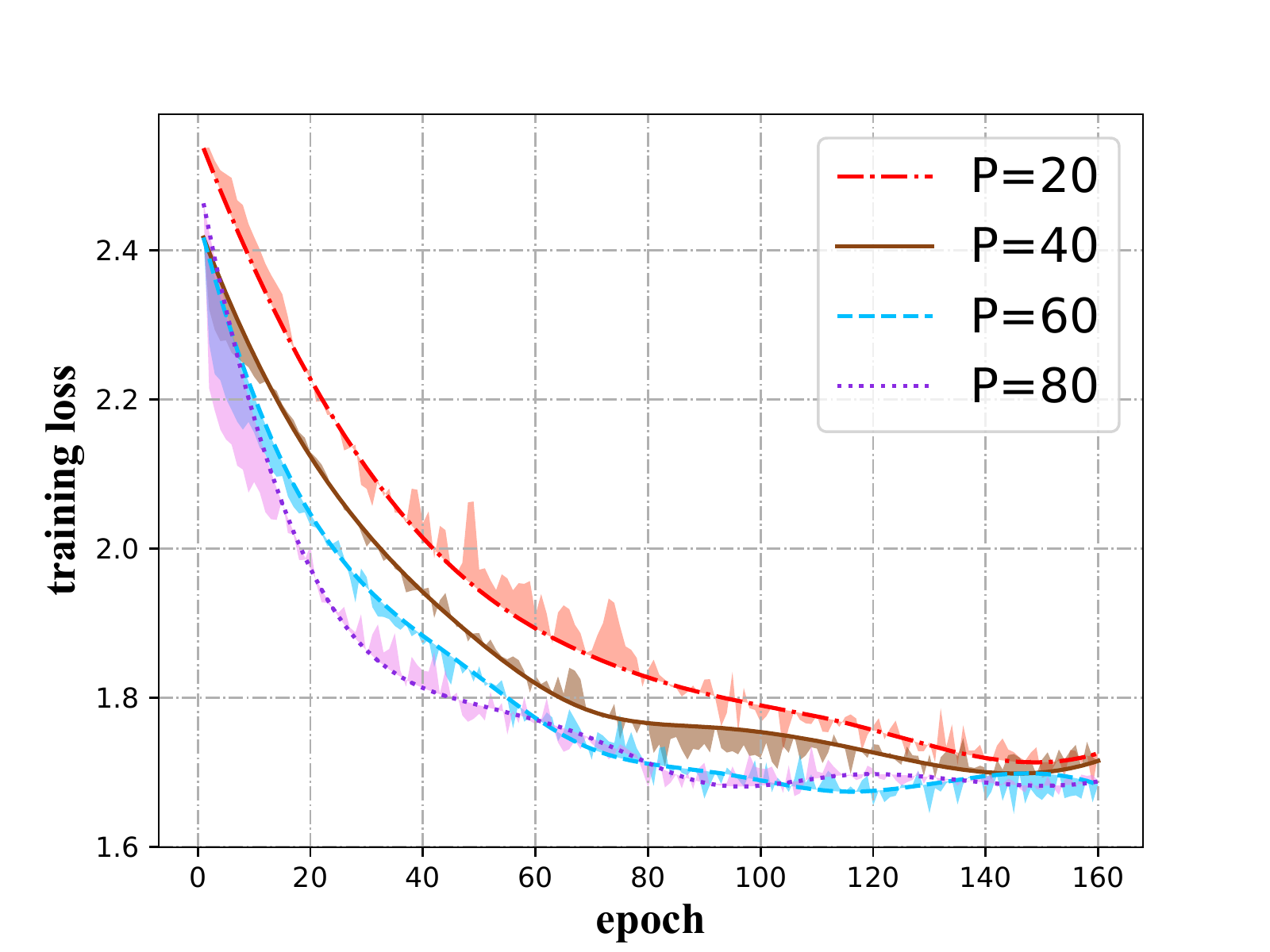}}
     \centering
     \caption{Convergence rate of DWFL when $P$ changes}\label{fig1}
\end{figure}

\begin{figure}[!hbt]
    \centering
    \subfigure[10 workers]{\includegraphics[width=0.47\linewidth]{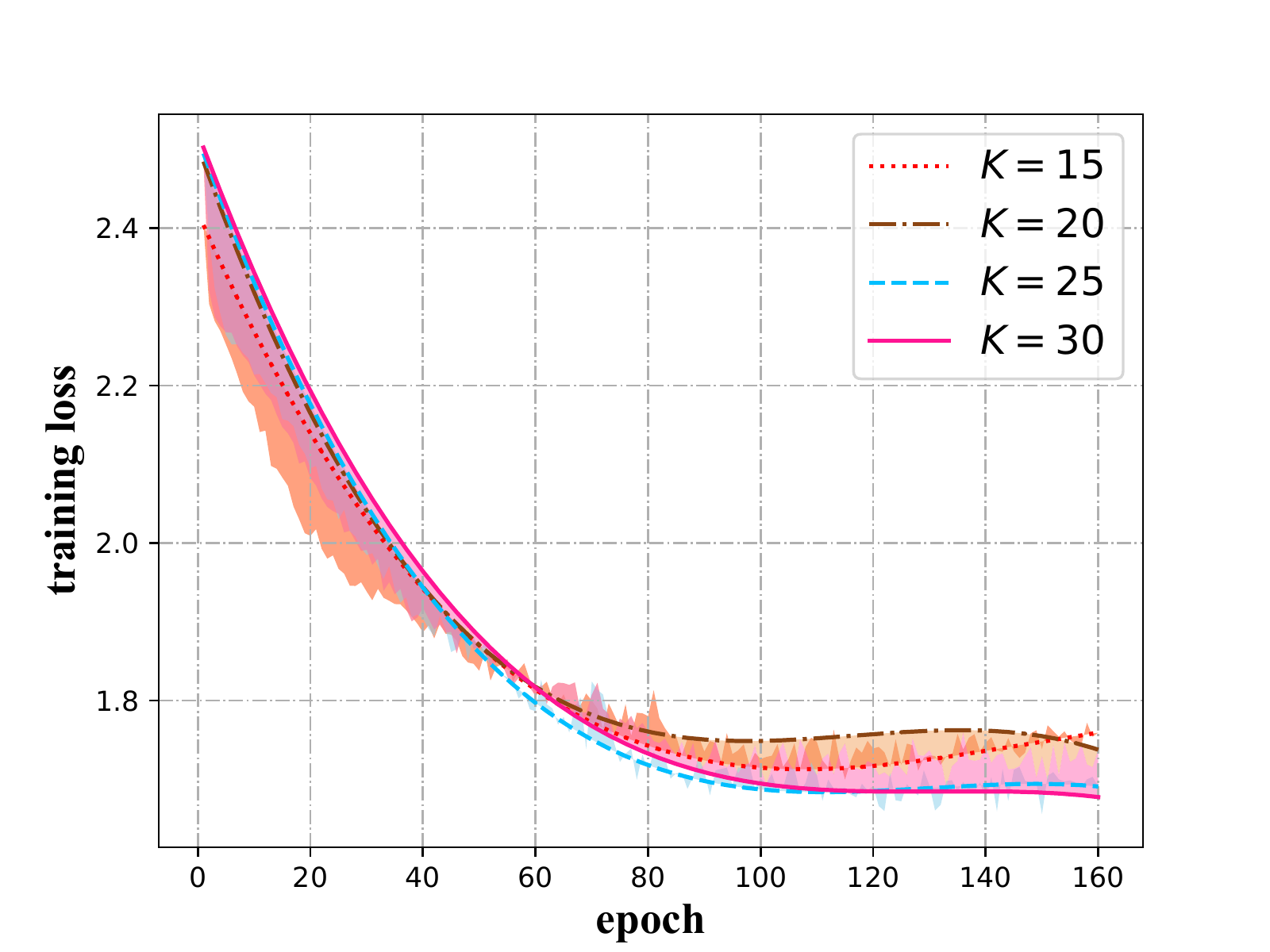}}
    \quad
     \subfigure[30 workers]{\includegraphics[width=0.47\linewidth]{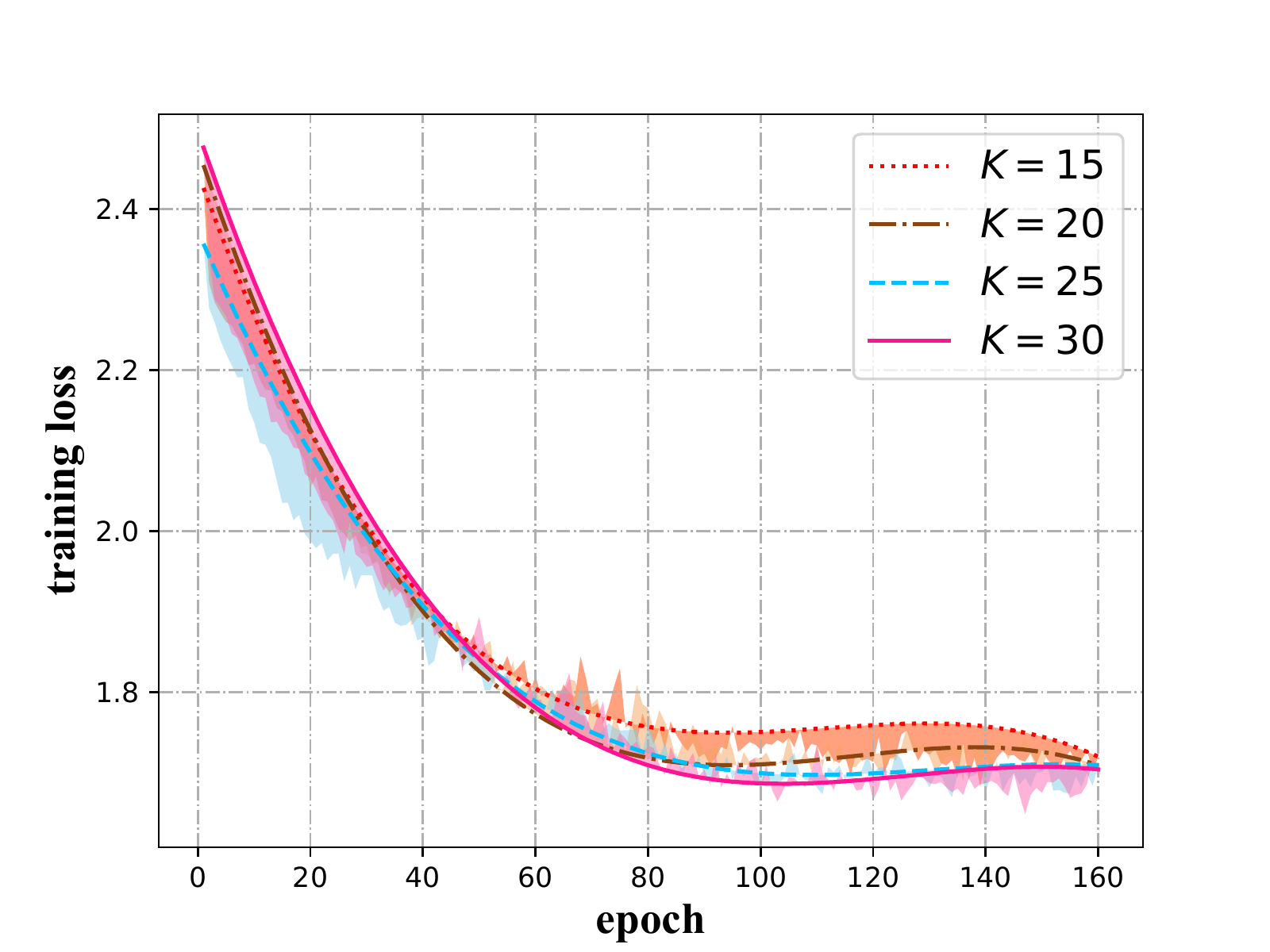}}
     \centering
     \caption{Convergence rate of DWFL when $N$ changes}\label{fig4}
\end{figure}

\begin{figure}[!hbt]
    \centering
    \subfigure[10 workers]{\includegraphics[width=0.47\linewidth]{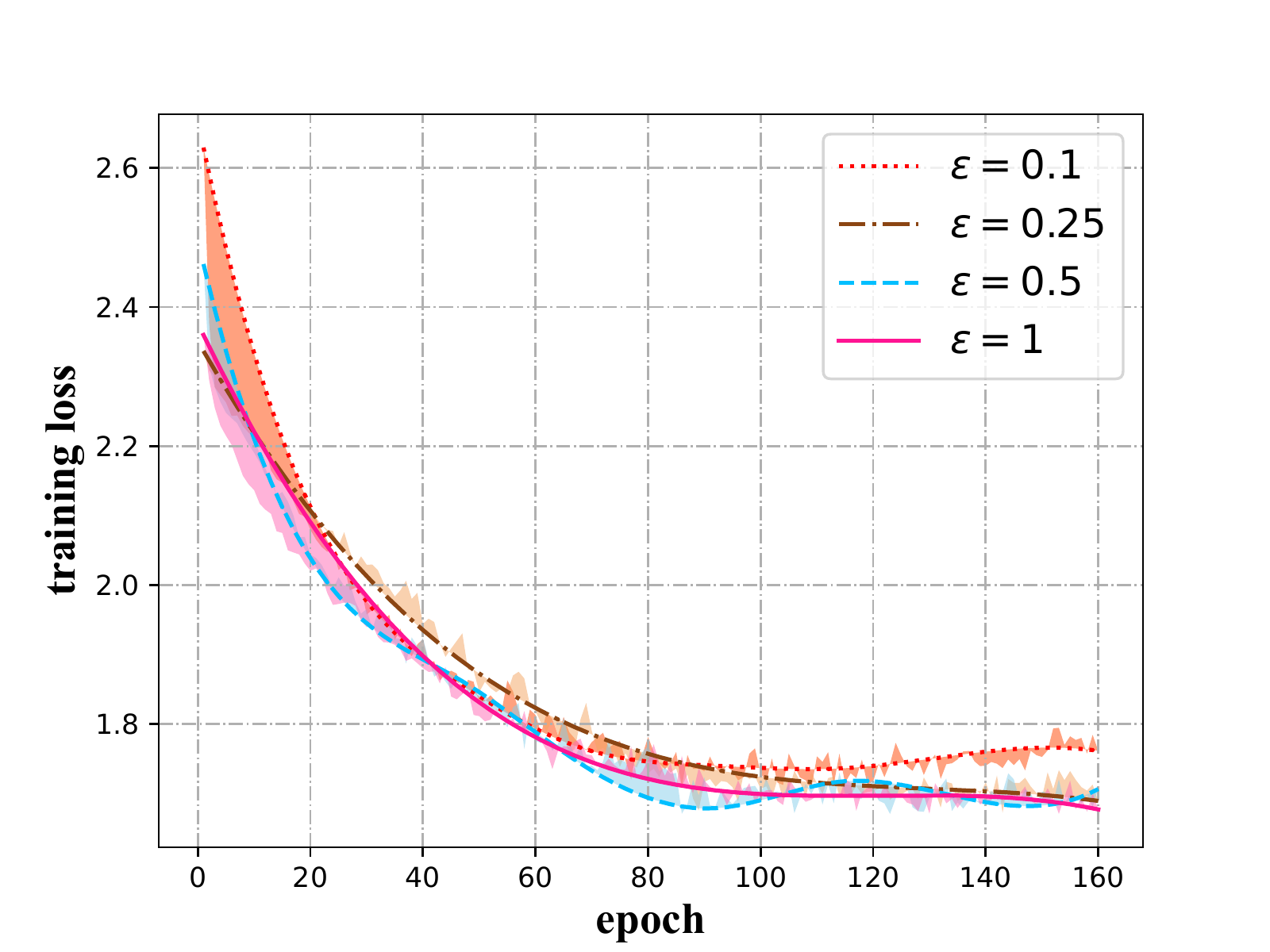}}
    \quad
     \subfigure[30 workers]{\includegraphics[width=0.47\linewidth]{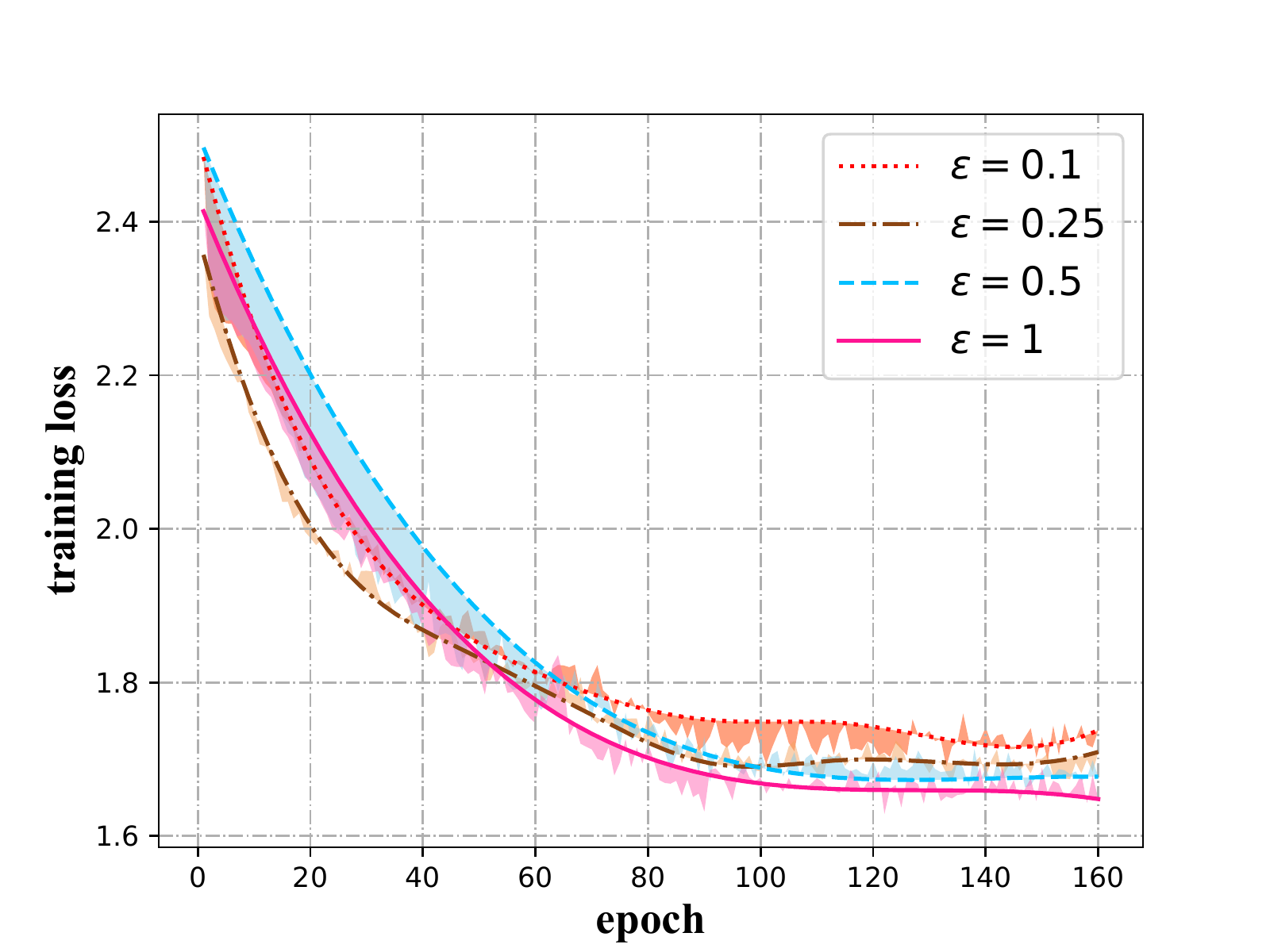}}
     \centering
     \caption{Convergence rate of DWFL when $\epsilon$ changes}\label{fig5}
\end{figure}

\begin{figure}[!hbt]
    \centering
    \subfigure[10 workers]{\includegraphics[width=0.47\linewidth]{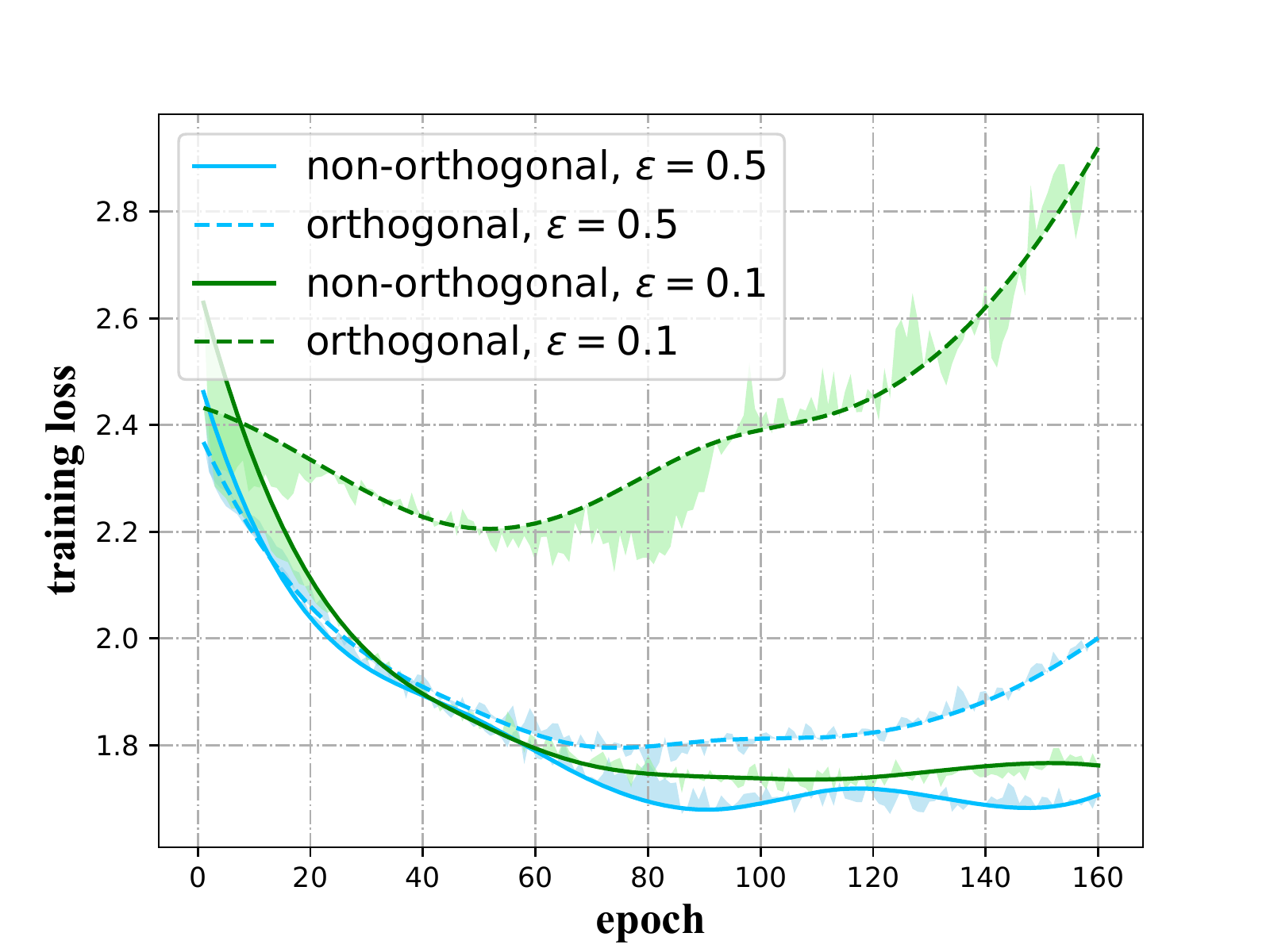}}
    \quad
     \subfigure[30 workers]{\includegraphics[width=0.47\linewidth]{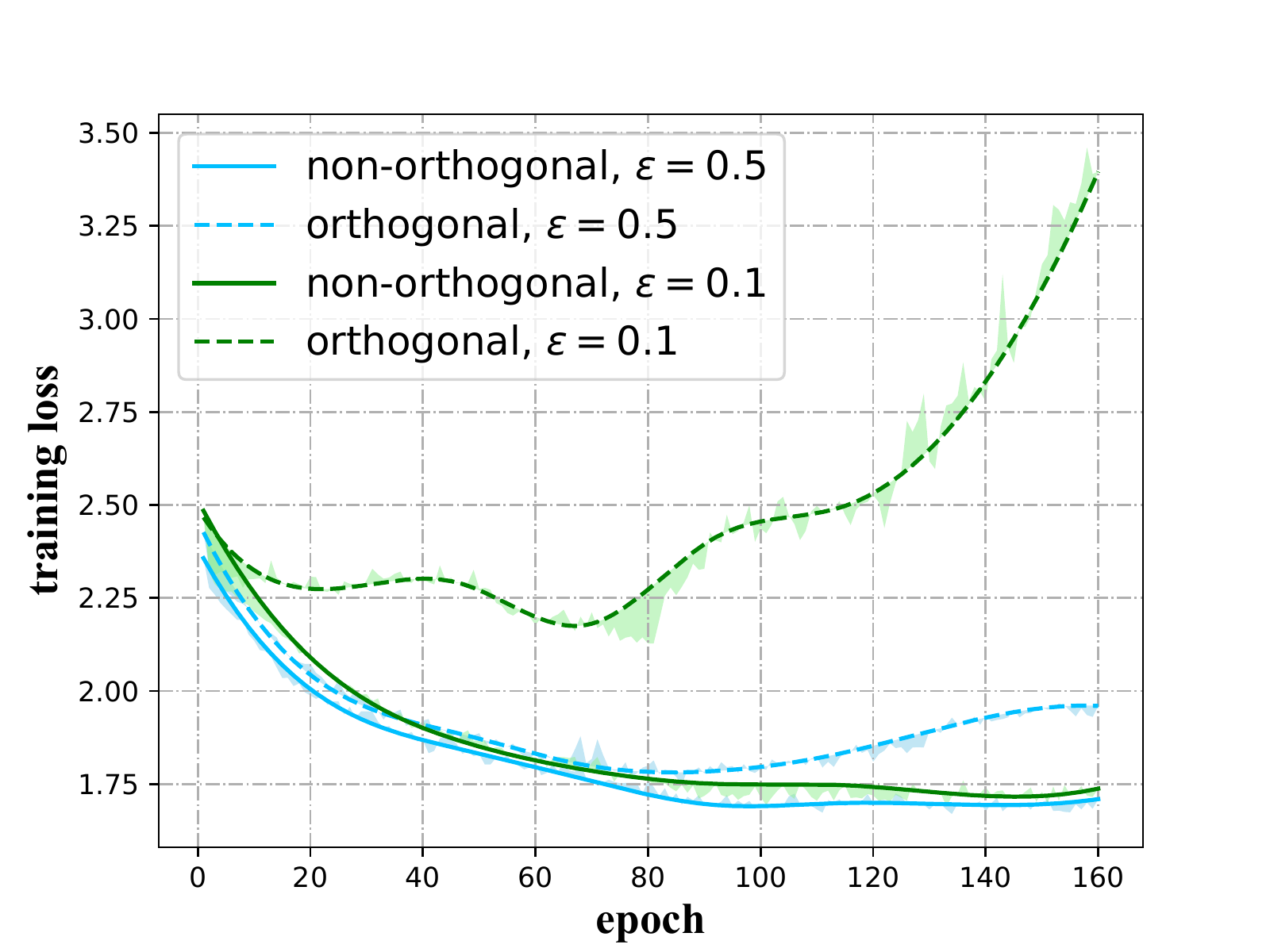}}
     \centering
     \caption{Convergence rate of non-orthogonal strategy and orthogonal strategy.}\label{fig2}
\end{figure}

\begin{figure}[!hbt]
    \centering
    \subfigure[10 workers]{\includegraphics[width=0.47\linewidth]{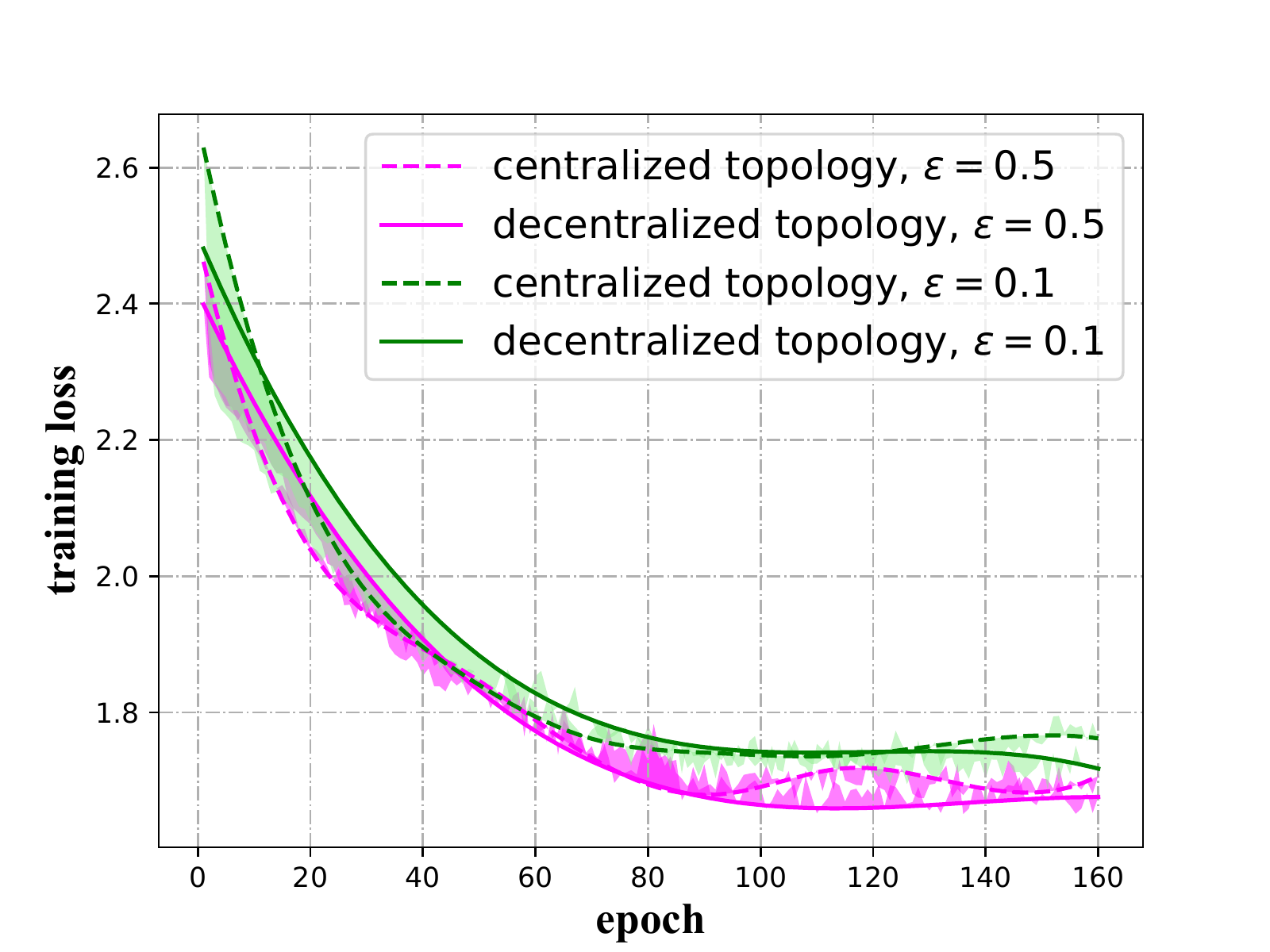}}
    \quad
     \subfigure[30 workers]{\includegraphics[width=0.47\linewidth]{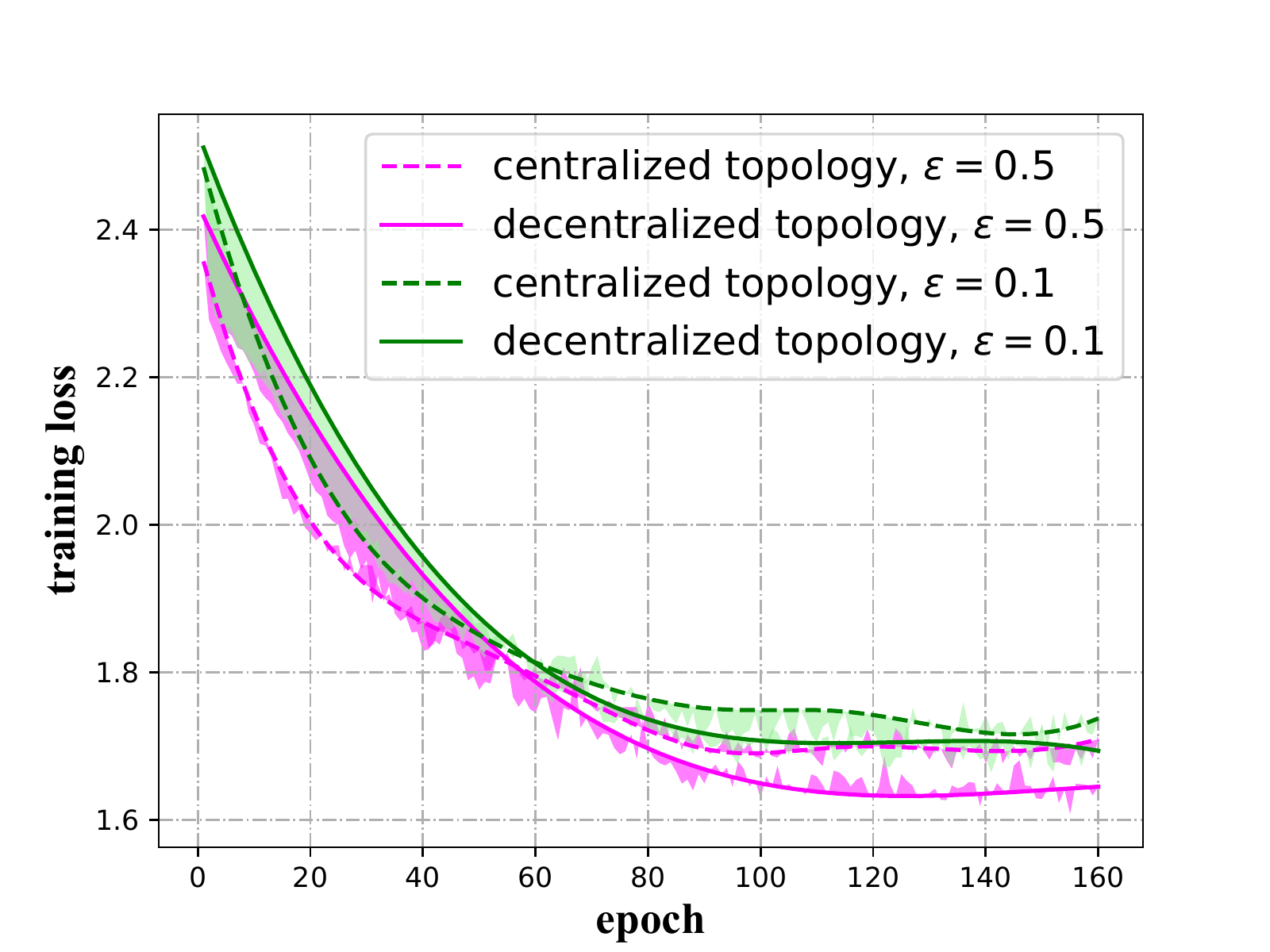}}
     \centering
     \caption{Convergence rate of centralized strategy and decentralized strategy.}\label{fig3}
\end{figure}

In Fig. \ref{fig1}, we show the convergence results with different transmitting powers as $20$ dBm, $40$ dBm, $60$ dBm and $80$ dBm. We set the number of workers $N=10$ and $N=30$ respectively. It shows that stronger transmitting power leads to more rapid convergence, it is reasonable since greater power has stronger ability to resist the interference of channel noise.

In Fig.~\ref{fig4}, we study the performance of our proposed algorithm DWFL with different number of workers from $N=15$, $N=20$, $N=25$ and $N=30$. We set the privacy budget as $\epsilon=0.1$ and $\epsilon=0.5$ respectively. It shows that DWFL performs better with more workers, which is consistent with our analysis.

In Fig.~\ref{fig5}, we show the convergence rate of DWFL with different privacy budget. Specifically, we set $\epsilon=0.1$, $\epsilon=0.25$, $\epsilon=0.5$ and $\epsilon=1$, respectively. It shows that DWFL has better convergence results with smaller privacy budget. Smaller $\epsilon$ means that more noise is added, which dampens the learning process.

In Fig. \ref{fig2}, we compare the convergence rate of orthogonal scheme and our non-orthogonal scheme with $N=10$ and $N=30$ respectively. We show that our non-orthogonal scheme has better advantage over orthogonal scheme in the case of the same privacy level. And our non-orthogonal algorithm converges faster in both $N=10$ and $N=30$. In the meanwhile, the orthogonal scheme almost failed to converge when $\epsilon$ = 0.1 and $N=10$ obviously. This is because when the privacy budget $\epsilon$ becomes smaller, more noise needs to be added, which affects the convergence of the algorithm.

Last but not least, we also evaluate the different convergence rate between the centralized topology and our decentralized topology in Fig. \ref{fig3}. We set the number of workers $N=10$ and $N=30$ respectively. It shows that our decentralized algorithm DWFL acquires more robustness and better convergence results in contrast with that in centralized setting. It is because that the more frequent information exchanges in the single-hop wireless network facilitates the learning process.

\section{Conclusion}\label{sec:con}
In this paper, we studied the decentralized wireless federated learning problem under the
requirement of differential privacy. We proposed an algorithm DWFL which satisfies $\epsilon$-DP. With detailed analysis and sufficient experiments, we showed
that our proposed
algorithms converges at the same rate $\mathcal{O}(\sqrt{\frac{1}{TN}})$ as the centralized algorithm. And our algorithm has great advantages in noise resistance compared with the orthogonal transmission scheme. Based on our work, it is meaningful
to further investigate decentralized wireless algorithms considering reducing communication cost.

%
%
%
%
%

\ifCLASSOPTIONcaptionsoff
  \newpage
\fi



\bibliographystyle{IEEEtran}
\normalem
\bibliography{ref}

\vspace{0cm}
\begin{IEEEbiography}[{\includegraphics[width=1in,height=1.25in,clip,keepaspectratio]{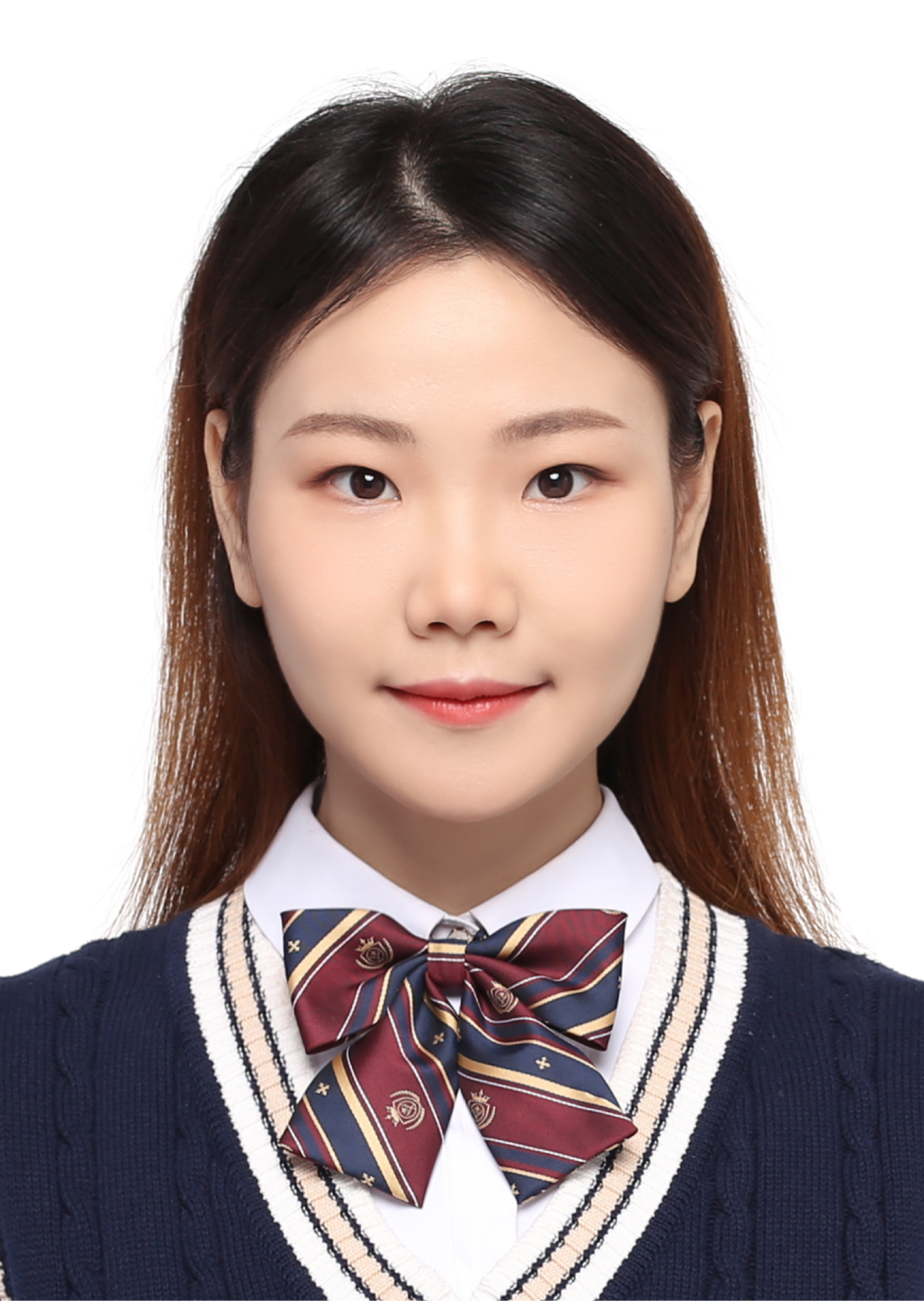}}]{Shuzhen Chen} 
	received the B.S. degree in 2019 from the School of Computer Science and Technology, Shandong University. She is currently pursuing the Ph.D. degree in School of Computer Science and Technology, Shandong University. Her research interests include distributed computing, wireless and mobile security.
\end{IEEEbiography}

\vspace{-1cm}
\begin{IEEEbiography}[{\includegraphics[width=1in,height=1.25in,clip,keepaspectratio]{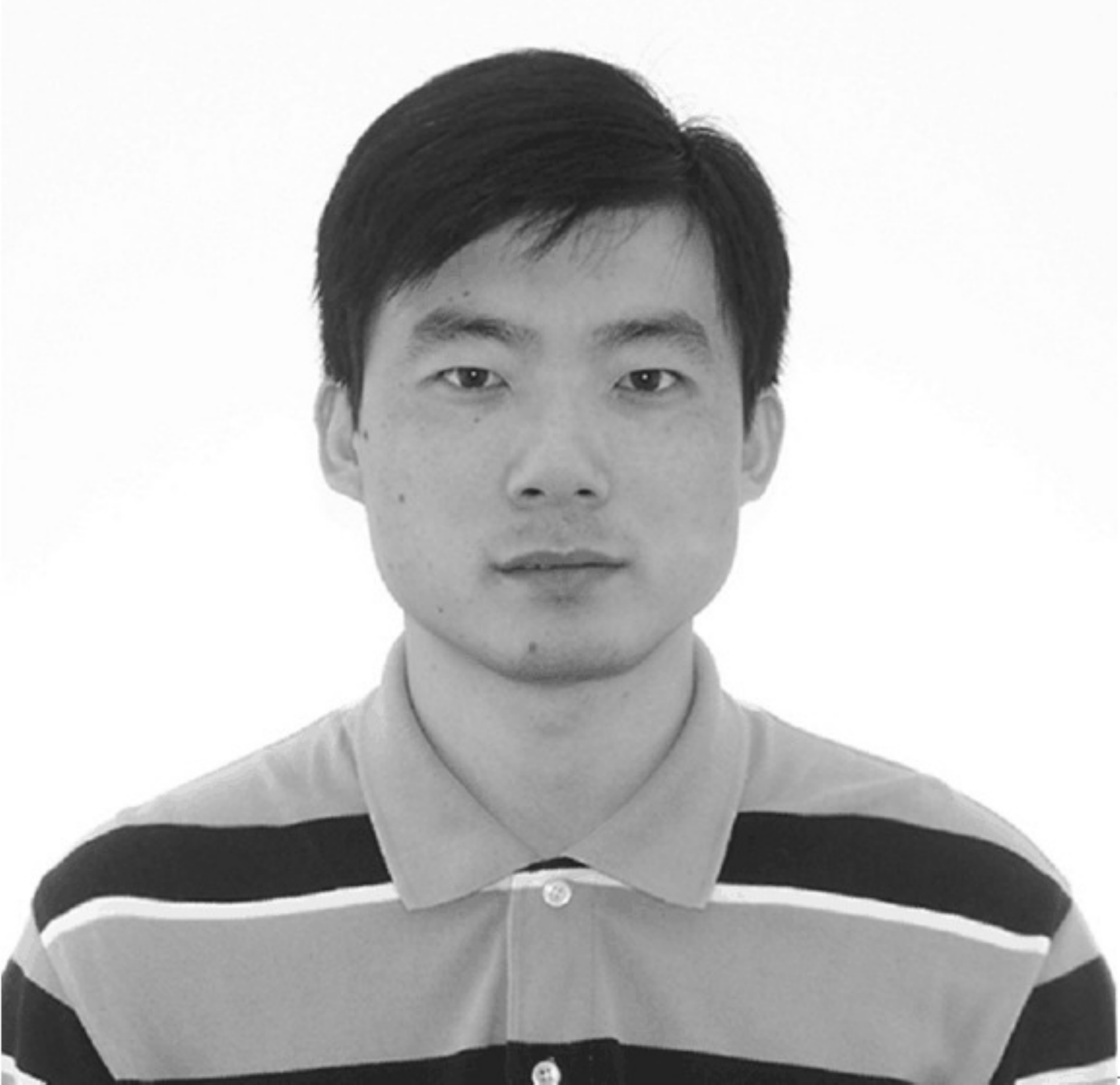}}]{Dongxiao Yu} 
	received the BSc degree in 2006 from the School of Mathematics, Shandong University and the PhD degree in 2014 from the Department of Computer Science, The University of Hong Kong. He became an associate professor in the School of Computer Science and Technology, Huazhong University of Science and Technology, in 2016. He is currently a professor in the School of Computer Science and Technology, Shandong University. His research interests include wireless networks, distributed computing and graph algorithms.
\end{IEEEbiography}

\begin{IEEEbiography}[{\includegraphics[width=1in,height=1.25in,clip,keepaspectratio]{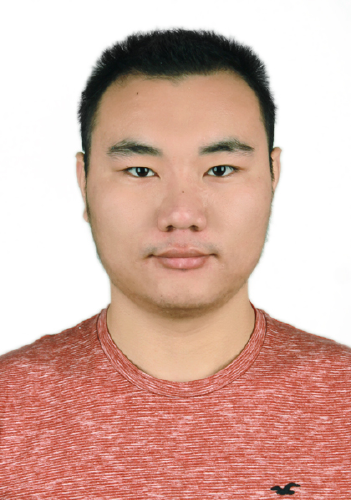}}]{Yifei Zou} 
received the B.E. degree in 2016 from Computer
School, Wuhan University, and the PhD degree in 2020 from
the Department of Computer Science, The University of
Hong Kong. He is currently an Assistant Professor with
the school of computer science and technology, Shandong
University, Qingdao. His research interests include wireless
networks, ad hoc networks and distributed computing.
\end{IEEEbiography}

\vspace{-9cm}
\begin{IEEEbiography}[{\includegraphics[width=1in,height=1.25in,clip,keepaspectratio]{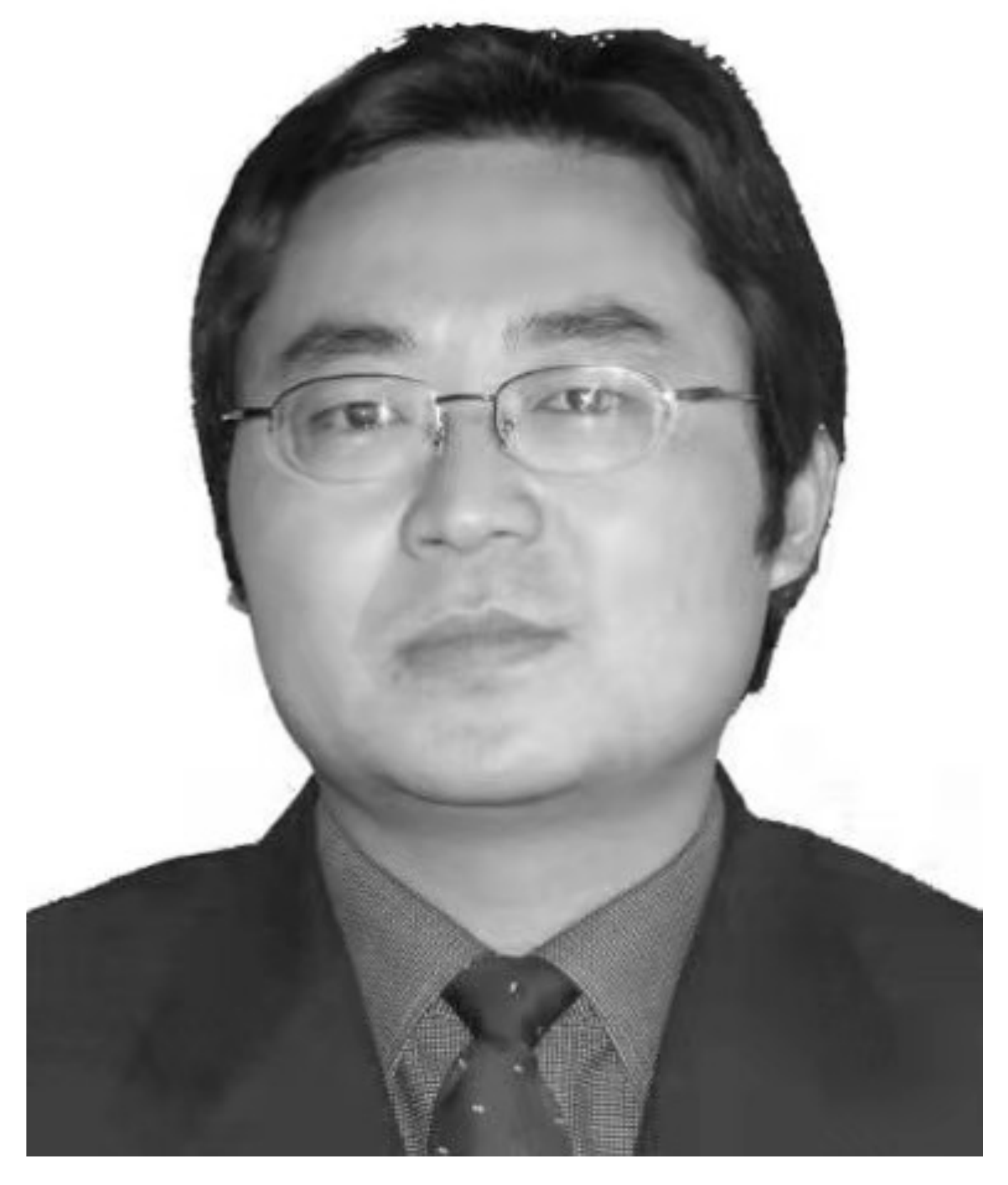}}]{Jiguo Yu} 
	Jiguo Yu (Fellow, IEEE) received the Ph.D. degree from the School of Mathematics, Shandong University, Jinan, China, in 2004. He became a full professor with the School of Computer Science, Qufu Normal University, Jining, China in 2007. He is currently a Full professor with the Qilu University of Technology (Shandong Academy of Sciences), Shandong Computer Science Center (National Supercomputer Center in Jinan), and the Shandong Laboratory of Computer Networks in Jinan, China. His main research interests include privacy-aware computing, wireless networking, distributed algorithms, blockchain, and graph theory.
\end{IEEEbiography}

\vspace{-9cm}
\begin{IEEEbiography}[{\includegraphics[width=1in,height=1.25in,clip,keepaspectratio]{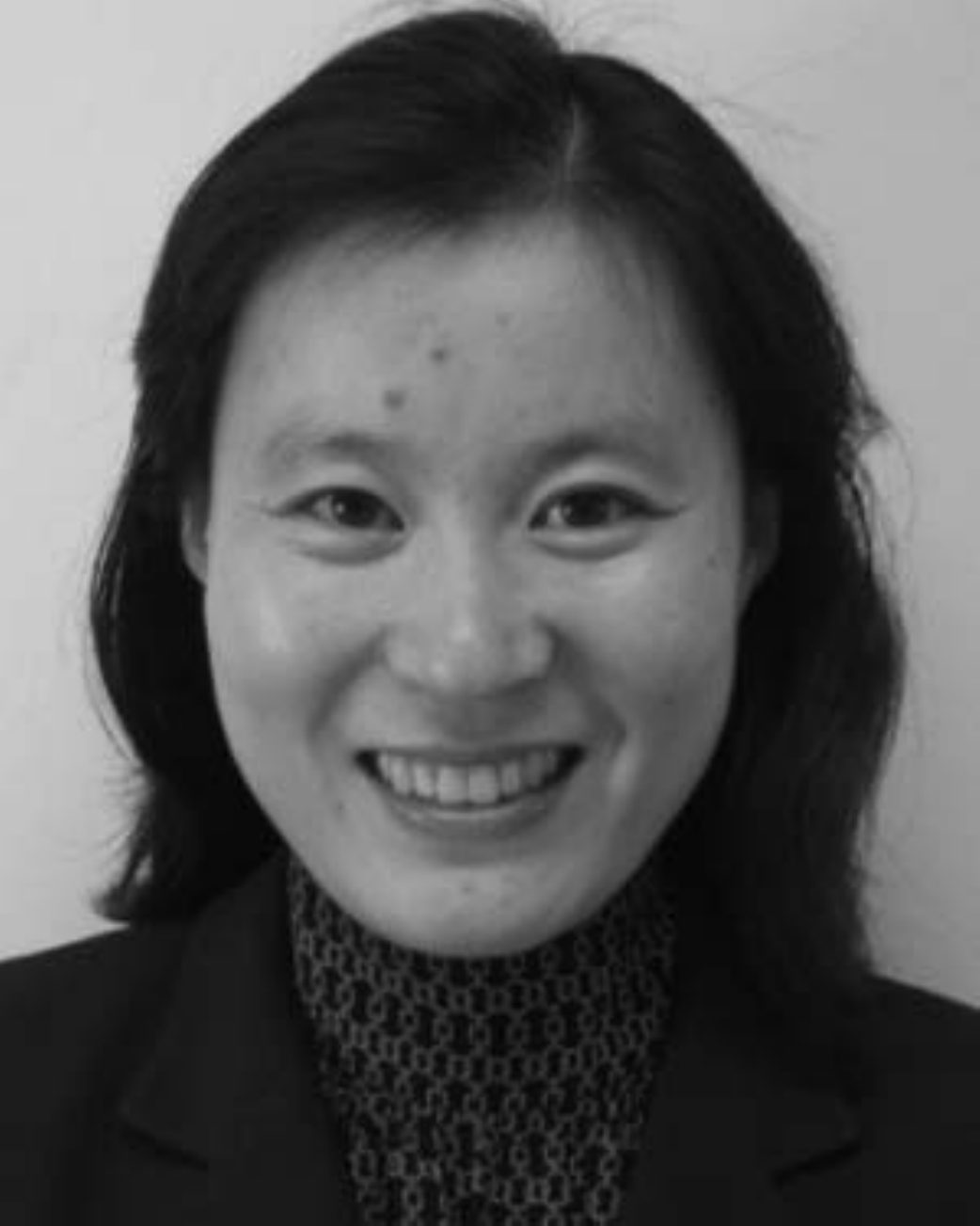}}]{Xiuzhen Cheng} 
	received her M.S. and Ph.D. degrees in computer science from the University of Minnesota -- Twin Cities in 2000 and 2002, respectively. She is a professor in the School of Computer Science and Technology, Shandong University. Her current research interests include cyber physical systems, wireless and mobile computing, sensor networking, wireless and mobile security, and algorithm design and analysis. She has served on the editorial boards of several technical journals and the technical program committees of various professional conferences/workshops. She also has chaired several international conferences. She worked as a program director for the US National Science Foundation (NSF) from April to October in 2006 (full time), and from April 2008 to May 2010 (part time). She received the NSF CAREER Award in 2004. She is Fellow of IEEE and a member of ACM.
\end{IEEEbiography}
\end{document}